\newcolumntype{+}{!{\vrule width 2pt}}
\newlength\savedwidth
\newcommand\thickhline{\noalign{\global\savedwidth\arrayrulewidth\global\arrayrulewidth 2pt}%
\hline
\noalign{\global\arrayrulewidth\savedwidth}}
\renewcommand{\@biblabel}[1]{\quad#1.}
\newcommand{\DKL}{\mathrm{KLD}}
\newcommand{\ones}[1]{J_{#1}}
\newcommand{\E}[1]{E\left(#1\right)}
\newcommand{\Var}[1]{\mathrm{Var}\left(#1\right)}
\newcommand{\logof}[1]{\log\left(#1\right)}
\newtheorem{theorem}{Theorem}
\begin{document}
\vspace*{0.2in}

% Title must be 250 characters or less.
\begin{flushleft}
{\Large
\textbf\newline{Mathematical measures of societal polarisation} % Please use "sentence case" for title and headings (capitalize only the first word in a title (or heading), the first word in a subtitle (or subheading), and any proper nouns).
}
\newline
% Insert author names, affiliations and corresponding author email (do not include titles, positions, or degrees).
\\
Johnathan A. Adams\textsuperscript{1},
Gentry White\textsuperscript{1,2},
Robyn P. Araujo\textsuperscript{1,3*},
%with the Lorem Ipsum Consortium\textsuperscript{\textpilcrow}
\\
\bigskip
\textbf{1} School of Mathematical Sciences, Queensland University of Technology, Brisbane, Queensland, Australia
\\
\textbf{2} QUT Centre for Data Science, Queensland University of Technology, Australia
\\
\textbf{3} Institute of Health and Biomedical Innovation, 60 Musk Avenue, Kelvin Grove, Queensland 4059, Australia
\\
\bigskip

% Insert additional author notes using the symbols described below. Insert symbol callouts after author names as necessary.
% 
% Remove or comment out the author notes below if they aren't used.
%
% Primary Equal Contribution Note
%\Yinyang These authors contributed equally to this work.

% Additional Equal Contribution Note
% Also use this double-dagger symbol for special authorship notes, such as senior authorship.
%\ddag These authors also contributed equally to this work.

% Current address notes
%\textcurrency Current Address: Dept/Program/Center, Institution Name, City, State, Country % change symbol to "\textcurrency a" if more than one current address note
% \textcurrency b Insert second current address 
% \textcurrency c Insert third current address

% Group/Consortium Author Note
%\textpilcrow Membership list can be found in the Acknowledgments section.

% Use the asterisk to denote corresponding authorship and provide email address in note below.
* r.araujo@qut.edu.au

\end{flushleft}
% Please keep the abstract below 300 words
\section*{Abstract}
In opinion dynamics, as in general usage, polarisation is subjective. To understand polarisation, we need to develop more precise methods to measure the agreement in society. This paper presents four mathematical measures of polarisation derived from graph and network representations of societies and information-theoretic divergences or distance metrics. Two of the methods, min-max flow and spectral radius, rely on graph theory and define polarisation in terms of the structural characteristics of networks. The other two methods represent opinions as probability density functions and use the Kullback–Leibler divergence and the Hellinger distance as polarisation measures. We present a series of opinion dynamics simulations from two common models to test the effectiveness of the methods. Results show that the four measures provide insight into the different aspects of polarisation and allow real-time monitoring of social networks for indicators of polarisation. The three measures, the spectral radius, Kullback–Leibler divergence and Hellinger distance, smoothly delineated between different amounts of polarisation, i.e.\ how many cluster there were in the simulation, while also measuring with more granularity how close simulations were to consensus. Min-max flow failed to accomplish such nuance.

%\linenumbers

% Use "Eq" instead of "Equation" for equation citations.
\section*{Introduction}
Polarisation occurs in society when its members fail to agree on a topic or opinion. In the simplest case, polarisation is a bifurcation of a society into two sub-groups holding disjoint opinions on a given topic. Multi-modal polarisation, or plurality, occurs when individuals cluster their opinions around more than two poles or opinion loci. The opinion dynamics literature defines polarisation and consensus as two mutually exclusive states \cite{COmodelinganalysisP12017Proskurnikov}, in contrast to how we describe societies in terms of \emph{degrees} of polarisation or consensus on a continuum. Part of the reason for viewing polarisation in the opinion dynamics literature as discrete states is because the models used to describe societies and individual interactions are deterministic and view opinion points in space. This approach facilitates computation and analysis but doesn't accurately reflect the subtle variations in how individuals conceive of their opinions. The Martins model \cite{BSbayesianupdating2009Martins} conceptualises an individual's opinion as Gaussian density functions and naturally produces nuanced states of polarisation or consensus. Opinions distributed as functions on a continuum create the need for more sophisticated measures of polarisation or consensus than classification into one of two broad and possibly minimally informative categories. Such measures are the subject of this paper and will provide tools to explore how models, and ultimately real-world societies, fall into polarisation or reach consensus.

We introduce, in this paper, four methods to measure polarisation in a group of individuals. Two of these methods are derived from graph theory and use graph structure to determine the agreement in a simulation. Specifically, these methods examine the graphs representing information flow between agents, i.e. how much of an opinion can be transmitted between agents. The first is the ``min-max'' information flow between individuals, a continuous version of the $k$-edge connectedness of a graph. The second is to calculate the spectral radius of the graph. Finally, the last two methods are based on probability theory using the mean Kullbeck-Liebler (K-L) divergence and Hellinger distance between all individuals. We will compare these four methods over a series of simulations using two models for agent interaction with various initial conditions and parameter values. We discuss how effective these methods are at measuring the polarisation of the simulations and conclude with which one we think is the best in the most general circumstance.

\subsection*{Methods of Measuring Polarisation from the literature}

One of the most commmon methods used in the literature \cite{BCboundedconfidence2002Hegselmann, BCcontinuousdynamics2007Lorenz, CUtvaryopinions2015Yan, COfuzzyHK2021ZhaoAtel, COcirclaropinion2022HanEtal} to measure polarisation in an agent-based simulation model is to count opinion clusters once the simulation reaches a steady state. Counting opinion clusters is a simple method, reflecting the tendency to categorise a distribution of opinions as plurality, polarisation or consensus. Early works of \cite{BCboundedconfidence2002Hegselmann, BCcontinuousdynamics2007Lorenz} manually identified and counted opinion clusters, which can result in subjective and possibly biased results.   Supervised and unsupervised clustering algorithms are available but present several technical challenges, and results can vary depending on the algorithm used. These challenges are exacerbated in more complex simulations with opinions (and clusters) represented as density functions. 

We illustrate the fundamental problem of relying solely on cluster counts to measure polarisation or consensus for models representing opinion as a density function on a continuum using two cases.  
First, consider the situation where a simulation in a steady-state produces $n$ opinion clusters with 85\% of the agents in a single cluster.  Even though there are $n$ clusters (where $n$ could potentially be large), indicating polarisation, the fact that 85\% of agents are in a single cluster suggests consensus.  Second, consider the case where there are multiple opinion clusters with approximately equal memberships, but the clusters are diffuse, and there is considerable overlap in the `distinct' opinions across clusters.  This case also suggests consensus.  Cluster counts do not tell a complete story for a model expressing opinions as density functions on a continuum.

A more advanced method of determining polarisation is \cite{Mrelativeagree2002Deffuant} which is used in the more modern work of \cite{Munifiedframework2018Coates}. This method measures polarisation accounting for the dispersion of agents across opinion space. Specifically, if  $p_+$ and $p_-$ are the proportion of agents that finished the simulation in the upper and lower half of the opinion space, respectively, then \cite{Mrelativeagree2002Deffuant} defines
\begin{align}
    y=p^2_++p^2_-,
    \label{eqn:ystat}
\end{align}
as a measure of polarisation. Agents have reached a consensus when $y=0$ or $y=1$, meaning that the simulation has converged either to the centre $y=0$ or one extreme opinion space $y=1$. When $y=0.5$, an equal number of agents are at each extreme, indicating polarisation.

While Eq \ref{eqn:ystat} contains more information about polarisation, it is limited. Eq \ref{eqn:ystat} arbitrarily divides opinion space into two halves. Furthermore, the method does not distinguish between a densely packed consensus and a more diffuse consensus. A diffuse consensus in the upper half of the opinion space will give an equivalent value of $y$ to a denser-packed consensus. This lack of distinction is a problem as it is common for later models like \cite{BSbayesianupdating2009Martins} to create a very diffuse consensus.

\subsubsection*{Modern Measures of Polarisation}

A more recent approach to measuring polarisation comes from \cite{Mquantifypolar2021Musco}. Their focus was on quantifying polarisation through summary measures. They evaluated the effectiveness of various statistical measures, from the naive variance of opinions to their novel contribution to a measure that uses kurtosis and skew. Although relevant, we seek to differ from \cite{Mquantifypolar2021Musco} by presenting measures that have a closer connection to polarisation and are more physically meaningful. 

%https://link.springer.com/chapter/10.1007/978-3-030-34971-4_13
Another novel approach originates from \cite{QPtwitterkavanaugh2019Darwish}, which measures polarisation through methods established in \cite{QPrandomwalk2018GuerraEtl}. The principle behind the measures is based on a random walk probability to travel from one cluster to another, with one method measuring the probabilities of random walks reaching one opinion cluster from another and the other calculating a `lower dimensional distance' between all agents and calculating the average inter and intra distance between clusters. With these measures \cite{QPtwitterkavanaugh2019Darwish} was able to more rigorously demonstrate that individuals in different opinion groups tended to consume different online media and used distinct hashtags. The method we present in this paper has the same potential to more distinctly identify the impacts of polarisation on society because our methods are continuous in a similar way as these probability measures.

%https://www.nature.com/articles/s41586-021-04167-x
A particularly novel approach to quantifying polarisation is to use machine learning algorithms \cite{QPmachinelearn2021Waller}. First communities were identified in an online space \cite{QPmachinelearn2021Waller} used a word-context learning algorithm which develops a vector of association between a particular context and all words. In the case of an online community, the `contexts' are individual users, and the `words' were the message boards the individuals could post on, so the more an individual posted to a message board, the more they became associated with the message board. With the communities identified \cite{QPmachinelearn2021Waller} sought to establish the social dimensions, the communities would lie upon and accomplished this though through a genetic learning algorithm which the authors seeded with the `Conservative' and `Democrat' message boards with these two message boards defining the ends of the social dimensions. With this social dimension, \cite{QPmachinelearn2021Waller} was able to determine how polarised communities were in the online space.

%Unpublished
%https://arxiv.org/abs/2204.08697

%Thesis
%https://aaltodoc.aalto.fi/handle/123456789/108218

\subsection*{The Graph Theoretic Approaches to Measuring Polarisation}
Social network analysis uses graphs and graph theory to represent and explore social structures and connections between individuals. Opinion dynamics is the study of how individuals connect and influence each other's opinions. We can conceive of any opinion dynamics model as producing graphs representing the social connections and interactions between a group of individuals. Thus it is sensible to approach measuring polarisation using graph theory. A consequence of polarisation is the changes to the graph's structure representing a group's interactions, e.g.\ the isolation of groups of people with different opinions. As a result, the social networks of polarised compared with non-polarised groups have a starkly different graph structure. We can then use graph theory concepts, such as edge connectivity and spectral analysis, to quantify the difference in these networks, thereby quantifying the polarisation occurring. See \nameref{S1_Appendix} for a review of the graph theory concepts.

\subsubsection*{Polarisation Methods based on $k$-edge-connectivity}
One of the prevalent ideas discussed in the graph structure literature \cite{SNanalysis2008Butts} is the concept of connectivity or the degree of connections between individuals represented as nodes connected via edges. The simplest connectivity measure is graph density, defined as the number of edges in a graph compared to the maximum possible number of edges. Density, while convenient and scale-free, only gives the general local connectivity of a graph while neglecting the graph's global features, such as when a graph is partitioned into two or more components, i.e. when a society divides into two or more groups of individuals. The society would be polarised in that case, but if in those groups individuals have many edges between them, measures of density will rate the graph and society as highly connected.

Component connectivity is complimentary to density. We can derive a measure of component connectivity by noting intuitively that a graph with fewer components has greater ``connectivity". Consider a graph or social network $G$ consisting of $n$ individuals represented as nodes on the graph, if $G$ consists of $K(G)$ components then $(n-K(G))/(n-1)$ is a measure of component connectivity. As $K(G)\rightarrow 1$ then the components connectivity approaches $1$ and as $K(G)\rightarrow n$ the component connectivity approaches $0$. In terms of polarisation, as $K(G)\rightarrow n$ individuals are forming more and more disconnected sub-groups, i.e. ``bubbles" in the current colloquialism, and as $K(G)\rightarrow 1$ individuals form fewer components, i.e. are less polarised or more in consensus. Due to its focus on global characteristics, the proposed component connectivity measure does not consider the internal connectivity of components, so a weakly connected graph of size $n$ and a strongly connected graph of size $n$ can have the same measure of component connectivity \cite{SNanalysis2008Butts}. A more refined version of this component connectivity metric is in \cite{SNtheoreticaldim1994Krackhardt} but suffers from the same issue of evaluating weakly connected graphs as equivalent to strongly connected graphs \cite{SNanalysis2008Butts}. In terms of polarisation, component connectivity only measures the degree that society has partitioned itself into distinct components, not the measure of a component's internal cohesion or communication.

A more compelling measure of graph connectivity originates from the concept of cutsets. Consider a graph $G$ consisting of $V$ vertices or nodes and $E$ edges connecting the nodes, $G=[V,E]$, a cutset is a subset of the edges $H\subset E$ or nodes $H\subset V$ such that if we remove $H$, the number of disconnected components in $G$ increases. When $H$ is a subset of nodes, it is known as a vertex cut. Likewise, when $H$ contains edges, it is known as an edge cut \cite{SNanalysis2008Butts}. The minimum cutset of $G$ is the cutset of either edges or nodes with the smallest size. The larger a graph's minimum cutset is, the more ``connected'' the graph. Therefore, we can measure a graph's connectivity by finding $k$ the size of the minimum cutset. The size of the minimum edge cutset is the $k$-edge-connectedness of a graph, and likewise, the $k$-vertex-connectedness is the size of the minimum vertex cutset \cite{SNanalysis2008Butts}. In social network models, we assume that $V$, the nodes or individuals, are fixed, whereas edges are the connections between individuals, and connections are a direct measure of a society's divisiveness. For this reason, we consider $k$-edge connectivity a more reasonable and intuitive measure of polarisation.

\subsubsection*{Spectral Analysis}
One important concept about graphs is their adjacency matrix representations. An adjacency matrix $A$ of a graph $G$ has elements such that $A_{ij} = 1$ if edge $e_{ij}$ exists between agents $i$ and $j$, and $A_{ij} = 0$ when $e_{ij}$ does not exists. If $G$ is a weighted graph then $A_{ij} = f(e_{ij})$ where $f$ the weighting function. We can then investigate network properties using spectral analysis of a graphs adjacency matrix. Spectral analysis uses the eigenvalue decomposition of matrices to summarise and identify characteristics of the network. The spectral radius of a matrix is an important part of spectral analysis. Denoted by $\rho(A)$, the spectral radius of a matrix $A$ is its largest eigenvalue in magnitude. What is important about the spectral radius is its relationship to the connectivity of a graph. We can illustrate this relationship by considering a society that has polarised into $m$ distinct opinion clusters. In a modelling sense this happens when agents share information completely (100\%) inside a cluster and no information (0\%) outside the cluster. We can represent this as a graph that is composed of $m$ complete subgraphs (i.e. fully connected subgraphs). We can then express this graph as an adjacency matrix such that
$$ A = \begin{bmatrix}
\ones{n_1} & 0          & \ldots & 0\\
0          & \ones{n_2} & \ldots & 0\\
\vdots     & \vdots     & \ddots & \vdots \\
0          & 0          & \ldots & \ones{n_m}
\end{bmatrix}, $$
where $\ones{n_i}$ is the $n_i\times n_i$ matrix of ones and $n_i$ is the size of the $i$th opinion cluster. From Theorem \ref{thm:ones} in \nameref{S2_Appendix} the spectral radius of $A$ is the size of the largest opinion cluster.

It is clear how the largest opinion cluster's size relates to polarisation. If the size of the largest cluster is the total number of individuals in society, then that society is in consensus. So we can then use the fraction of individuals in the largest cluster to measure how close society is to consensus. Theorem \ref{thm:ones} shows that the spectral radius is the largest cluster size when a society is divided distinctly into opinion clusters. The main advantage of the spectral radius is that we can calculate the spectral radius even when opinion clusters aren't distinct and when there is a significant overlap between clusters. So the spectral radius offers us a method to estimate the size of the `largest cluster,' which allows us to use, more broadly, the largest cluster size as a measure for polarisation.
    
\subsection*{The Information Theoretic Approaches to Measuring Polarisation}
Some opinion dynamics models \cite{BSbayesianupdating2009Martins,BSbayesianupdating2012Martins} views opinions as probability distributions so it follows to use $f$-divergences as ways to quantify differences between two agents' opinions \cite{Fchisquare2014Nielsen,Ftutorial2004Csiszar}. The $f$-divergences measure distance between two probabilistic objects and are a `statistical distance'. For models that consider agent opinions as probability distributions, $f$-divergences like the Kullback–Leibler divergence and the Hellinger distance can provide insights into polarisation. As for models that don't consider agent opinions as probability distributions, like HK bounded confidence, we can interpret agent opinions in a probabilistic way.

    \subsubsection*{Kullback–Leibler Divergence}
Kullback–Leibler divergence ($\DKL$) is a measure of the difference between two probability distributions \cite{KLinformationsufficiency1951KullbackLeibler}. The literature uses $\DKL$ to compare models of statistical inference for Bayesian statistics. The continuous version of the K–L divergence is
\begin{align}
    \DKL (f||g) = \int_{-\infty}^\infty f(x)\log \left( \frac{f(x)}{g(x)}\right)\ \mathrm{d}x, \label{eqn:DKL}
\end{align}
where $f$ and $g$ are the probability density functions \cite{KLinformationsufficiency1951KullbackLeibler}. Note that $\DKL(f||g)\neq\DKL(g||f)$. The principle is to maximise the $\DKL$ of the posterior and prior distributions, which is equivalent to maximising over the likelihood in Bayesian statistics \cite{KLDbreview1995Chaloner,KLDcomfit2002Spiegelhalter}. Because of $\DKL$'s link to the likelihood in Bayesian statistics, it is sensible to use $\DKL$ as a measure of distance between agent's opinions in the Martins model \cite{BSbayesianupdating2009Martins} due to the model's reliance on Bayesian inference for generating polarisation. Taking the $\DKL$ between two agents' opinions would be treating one agent's opinion as a theoretical prior and the other's opinion as a theoretical posterior, and $\DKL$ would then reveal how much information is required for the prior agent to adopt the posterior agent's opinion. Therefore finding the $\DKL$ between all agents in a simulation to find the mean of all the inter-agent $\DKL$s, i.e.\ the mean $\DKL$, should reveal how `close' agents are in opinion, thereby revealing how polarised the simulated society is.

%% Talk about it's use in Bayesian statistics
% https://projecteuclid.org/journals/statistical-science/volume-10/issue-3/Bayesian-Experimental-Design-A-Review/10.1214/ss/1177009939.full
% Definitely check
% https://link.springer.com/chapter/10.1007/978-1-4612-1694-0_15
% https://rss.onlinelibrary.wiley.com/doi/full/10.1111/1467-9868.00353

%% Thread the needle aka reason why we are using KLD
    
    \subsubsection*{Hellinger Distance}
Similar to $\DKL$ the Hellinger distance is the distance between two probability density functions $f$ and $g$ \cite{Masymptoticsstatistics2000Lucien,Masymptoticstatistics2000Vaart,Mmeasureprob2001Pollard}, except Hellinger distance qualifies as a distance metric \cite{Masymptoticsstatistics2000Lucien} whereas $\DKL$ does not. The basis of the Hellinger distance is the Hellinger affinity \cite{Hellinger1909neue} defined as 
$$ 
\int_X \sqrt{f(x) g(x)}\ \mathrm{d}x.
$$
When $f(x)=g(x)\ \forall x\in X$ the Hellinger affinity is 1 thus the squared Hellinger distance is 
\begin{align}
    H^2(f,g) = 1 - \int_{X} \sqrt{f(x) g(x)}\ \mathrm{d}x \label{eqn:Hdistance}
\end{align}
\cite{Masymptoticsstatistics2000Lucien,Masymptoticstatistics2000Vaart,Mmeasureprob2001Pollard}. We can interpret the Hellinger distance as the analogue of Euclidean distance from space vector but for probability distributions. Hence it follows to calculate the Hellinger distance between agents in the Martins model like we have suggested with the $\DKL$ and like with $\DKL$ we can find the mean Hellinger distance between every agent pair in a simulation to measure the polarisation of the simulation. Due to the Hellinger distance being a distance metric, the Hellinger distance has several advantages over $\DKL$, chief of which is the Hellinger distance's symmetry, i.e.\ $H(f,g) = H(g,f)$ which halves the number of computations when calculating the mean Hellinger distance.

% https://www.sciencedirect.com/science/article/pii/S095741741400089X
% https://books.google.com.au/books?hl=en&lr=&id=C-lOIgDp5_0C&oi=fnd&pg=PP1&ots=nLTYqr0R6a&sig=SN3rWZaEwkPAExWFPEqEV4MW7Ss&redir_esc=y#v=onepage&q=Hellinger&f=false
% https://gdz.sub.uni-goettingen.de/id/PPN243919689_0136?tify={%22panX%22:0.438,%22panY%22:0.75,%22view%22:%22info%22,%22zoom%22:0.341}
%% Similar to KLD in that it is all about minimising

\section*{Methods}
In this section, we shall discuss how the methods we used to determine the effectiveness of each measure of polarisation. We developed two types of simulations using two distinct models of agent interaction. In both types, we varied core parameters which influenced polarisation in the selected models. We then applied the four measures of polarisation to every simulation. Since these measures rely on the structural elements of the social network to measure polarisation, we shall have agents interact in an open `everyone can talk to every' environment to not bias toward polarisation. 

\subsection*{Agent Interaction Models for the Simulations}
We used two interaction models in the simulations for this paper: the Martins model \cite{BSbayesianupdating2009Martins, BSbayesianupdating2012Martins, BSmistrust2021Adams} and the Hegselmann-Krause (H-K) Bounded Confidence model \cite{BCboundedconfidence2002Hegselmann}. The Martins model is the newest and generates complex behaviour in simulations. The H-K Bounded Confidence model is older than the Martins model, but the behaviour it produces in simulations is well understood.

The model first proposed in \cite{BSbayesianupdating2009Martins, BSbayesianupdating2012Martins}, which we shall call the Martins model, is a simple updating rule for agents derived from Bayesian inference. Because the model operates in a Bayesian framework, each agent's opinion is a guess at a true value. An agent's opinion follows a normal distribution where $x$ is the mean of that normal distribution, and $\sigma$ is the standard deviation. The value $x$ is the location of that agent's opinion, and the standard deviation $\sigma$ is an agent's uncertainty in their opinion, their strength of belief.

Eq \ref{eqn:MartinsOther} and \ref{eqn:Martins} describes the how agents update their opinion. Martins creates polarisation through the parameter $p$, where $p$ is the probability an agent shares useful information with another to update their opinion. Essentially $p$ is a global trust rate. The effect of including $p$ is that when two agents interact, their opinions are updated using $0<p^*<1$, which measures how much the two agents trust each other.  $p^*$ is affected by how distant their opinions are relative to their uncertainties \cite{BSmistrust2021Adams}.

\begin{align}
    &x_i(t+1) = p^* x_i(t) +(1-p^*) \frac{x_i(t)/\sigma_i(t) + x_j(t)/\sigma_j(t)}{1/\sigma_i(t) + 1/\sigma_j(t)}, \label{eqn:MartinsOther}\\
    &\sigma_i^2(t+1) = \left(1 - \frac{\sigma_i^2(t)}{\sigma_j^2(t)+\sigma_i^2(t)}\right)\sigma_i^2(t) + p^* (1-p^*) \left(\frac{x_i(t) - x_j(t)}{1+\sigma_j^2(t)/\sigma_i^2(t)}\right)^2, \label{eqn:Martins}
\end{align}
where
$$
p^* = \frac{p \phi\left(x_i(t) - x_j(t),\sqrt{\sigma_i^2(t)+\sigma_j^2(t)}\right)}{p \phi\left(x_i(t) - x_j(t),\sqrt{\sigma_i^2(t)+\sigma_j^2(t)}\right) + 1-p},
$$
and
$$
\phi(\mu,\sigma) = \frac{1}{\sigma\sqrt{2\pi}} e^{\frac{-\mu^2}{2\sigma^2}}
$$

The Martins model, along with it's extension in \cite{BSmistrust2021Adams}, is a compelling explanation of polarisation with $p$ and $p^*$. It also produces novel behaviour. A good measure of polarisation might explain the model's behaviour. In \cite{BSbayesianupdating2009Martins, BSbayesianupdating2012Martins} the model used an unshared uncertainty assumption, where an agent could not share their $\sigma$, we use a variant of the model with that assumption relaxed \cite{BSmistrust2021Adams}.

The Hegselmann-Krause (H-K) Bounded Confidence model \cite{BCboundedconfidence2002Hegselmann} is a well-studied model. It was one of the first models to create polarisation reliably. Each agent has a continuous opinion $x$. An agent $i$ will update their opinion by first taking all agents' opinions in the interval $[x_i-\epsilon,x_i+\epsilon]$, where $x_i$ is the opinion of agent $i$, and $\epsilon$ is a parameter set by the model. Agent $i$'s new opinion is the mean of all the opinions in the interval. The H-K model will serve to calibrate the new polarisation metric.

\subsection*{Applying the Graph Theory Measures}
For the graph-theoretic measures to work appropriately, we need to establish graphs of the interpersonal connection between agents in a simulation.  Specifically, we need the induced adjacency matrix of that graph representing the information flow between agents.  
The Martins model has a built-in measure of an agent's ability to compromise with other agents, $p^*$. Practically $p^*$ represents how much an agent accepts the opinion of another agent, i.e. influenced by the agent. Thus, an adjacency matrix created from $p^*$ represents the information flow of all agents in a simulation. 

The H-K Bounded Confidence model has a more intuitive adjacency matrix with entries equal to $1$ or $0$. If agent $i$ and $j$'s opinions are within $\epsilon$, then the of the $i$th row and $j$th column in the adjacency matrix is $1$ because agents within $\epsilon$ of each others' opinion will have maximum influence on each other due to the updating rules of the H-K Bounded Confidence model. For the same reason, the $i$th row and $j$th column in the adjacency matrix will be $0$ when agent $i$ and $j$'s opinions are outside $\epsilon$ distance of each other.

\subsubsection*{$k$-edge-connectivity/Min-max Flow}
The simplest method to calculate the $k$-edge-connection of a graph is to turn the problem into a series of maximum flow problems. We find the minimum of all the maximum flow problems, hence the min-max flow algorithm. The maximum flow problem is defined as follows. Let $G = [V,E]$ be a weighted digraph. Let there be a source vertex $s\in V$ and a sink vertex $t\in V$. The weight of each edge is its capacity $c \in \mathbb{R}$. A flow is a function $f:E\rightarrow \mathbb{R}$ which satisfies these conditions.
\begin{itemize}
\item Capacity constraint: The flow over an edge must not exceed its capacity $c$.
\item Conservation of flows: The flow entering a vertex must equal the flow leaving the vertex, excluding $s$ and $t$.
\end{itemize}
The maximum flow problem is to route as much flow from $s$ to $t$, which gives the maximum flow rate $f_{max}$. Algorithms to find $f_{max}$ are the Ford–Fulkerson algorithm \cite{MFffalgorthim1956FordFulkerson}, Dinic's algorithm \cite{MFsolution1970Dinitz} and push–relabel algorithm \cite{MFnewapproach1988GoldbergTarjan}. See \cite{MFnewapproach1988GoldbergTarjan} for a more extensive list of algorithms. In this paper we use the MatLab built-in {\tt flow} function to find $f_{max}$.

To apply a maximum flow algorithm to an unweighted and undirected graph $G$, we need to convert $G$ to a weighted digraph. We accomplish this by replacing every edge in $G$ with two directed edges. The two directed edges connect the two previously connected vertices. Lastly, we assign the capacity of the new directed edges to be one. We find the $k$-edge-connectivity of $G$ using maximum flow by first iterating over every pair of vertices. We set one vertex as the source and the other as the sink and find $f_{max}$ for that source and sink pair. The minimum of those $f_{max}$ will be the $k$-edge-connectivity of $G$. 

The algorithm to find $k$-edge-connectivity follows from Menger's theorem, which is a special case of the max-flow min-cut theorem \cite{MFffalgorthim1956FordFulkerson}, stating that the number of edge independent paths between two vertices is equal to the minimum set of edge cuts that separate those two vertices. Therefore finding the minimum of the $f_{max}$ derived from the directed version of $G$ will result in $k$-edge-connectivity of $G$ \cite{MFtestinggraphconn1974Tarjan}. Noted in \cite{MFmultiterminal1961GomoryHu} we can improve the algorithm by fixing a vertex and finding the minimum of its maximum flows with all other vertices in the graph. 

The adjacency matrices created by the Martins model \cite{BSbayesianupdating2009Martins,BSbayesianupdating2012Martins} are weighted digraphs. Applying the min-max flow algorithm will result in a meaningful connectivity measurement, hence polarisation measurement, even with a non-integer result for `$k$'. H-K bounded confidence model produces an undirected and unweighted graph and gives $k\in\mathbb{Z}$. So there will be no difficulties in applying the min-max flow. Although more efficient algorithms exist for finding $k$-edge-connectivity of an unweighted graph, for consistency, we will still use the min-max flow algorithm since these algorithms won't work on the Martins' adjacency matrix. See \nameref{S3_Appendix} for other potential methods to calculate min-max flow.

	\subsubsection*{Largest Cluster size with Spectral Radius}
Determining the spectral radius was simple. After producing the adjacency matrix at a particular time in the simulation, we calculated the spectral radius of the adjacency matrix using the in-built Matlab function {\tt eig}.

\subsection*{Applying the $f$-divergences Measures}
The general approach with the $f$-divergences was to determine the pairwise divergences between all agents and then measure the mean $f$-divergence.

\subsubsection*{Kullback–Leibler Divergence}
Applying Kullback–Leibler divergence to the Martins model is simple. Since the Martins model considers opinions as normal distortions, finding the Kullback–Leibler divergence between two agents is finding the K-L divergence between two Gaussians $f$ and $g$. This simplifies Eq \ref{eqn:DKL} to
\begin{align}
    \DKL(f||g) = \logof{\frac{\sigma_j}{\sigma_i}} + \frac{\sigma_i^2+\left(x_i-x_j\right)^2}{2\sigma_j^2} - \frac{1}{2}, \label{eqn:DKLnorm}
\end{align}
where $x_i$ and $\sigma_i$ are the mean and standard deviation for $f$, and $x_j$ and $\sigma_j$ are the mean and standard deviation for $g$ (for deviation see \nameref{S4_Appendix}). Using Eq \ref{eqn:DKLnorm}, we can calculate the $\DKL$ between all possible agent pairs and then calculate the mean $\DKL$. Agents in an H-K bounded confidence model simulation have very definitive opinions (i.e. places where they rank other opinions as 0), creating singularities in K–L divergence; thus, we can't use K–L divergence to measure the polarisation of those simulations. K–L divergence is thereby limited in its applicability which we discuss in the discussion section of this paper.

\subsubsection*{Hellinger Distance}
It is simple to apply Hellinger distance to the Martins model. Since every agent's opinion is essentially a normal distribution, we can find the squared Hellinger distance between two normal distributions, which is
$$
H^2 = 1 - \sqrt{\frac{2\sigma_1\sigma_2}{\sigma_1^2+\sigma_2^2}}e^{\frac{(x_1-x_2)^2}{4(\sigma_1^2+\sigma_2^2)}},
$$
where $x_1$ and $x_2$ are the means and, $\sigma_1$ and $\sigma_2$ are the standard deviations of two normal distributions \cite{Mhellinger2017Kitsos}. Applying the Hellinger distance to the H-K bounded confidence model is less trivial. Although agents don't have probability density functions for opinions, we can consider an agent's opinion as a uniform distribution over $[x_i-\epsilon,x_i+\epsilon]$. The Hellinger affinity of two agents' opinions will be the area of overlap between the two uniform distributions. The squared Hellinger distance is 
$$
H^2(f,g)=\begin{cases}
          \frac{x_1-x_2}{2\epsilon} \quad &\text{if} \ x_1-x_2 \leq 2\epsilon\\
          1 \quad &\text{if} \ x_1-x_2 > 2\epsilon \\
     \end{cases}
$$
where without loss of generality we assume that $x_1 > x_2$ (for deviation see \nameref{S5_Appendix}).

Like with $\DKL$, we can find the Hellinger distance between all agents and then find the mean of those Hellinger distances to use as a measure of polarisation. Finding the mean of these $f$-divergences provides a general perspective on the differences between agents, thus providing resistance to outlying agents.

%%%%%%%%%%%%%%%%%%%%%%%%%%%%%%%%%%%%%%%%%%%%%%%%%%%%%%%%%%%%%%%%%%%%%%%%%%%%%%%
\section*{Results}
This section presents the results of simulations from Martins and the H-K bounded confidence models for different initial conditions and model parameters and analysed polarisation using the measures based on the min-max flow rate, spectral radius, and mean $\DKL$ and Hellinger distance. We varied the parameter $\epsilon$ for the H-K bounded confidence model. We varied the initial $\sigma$ and fixed $p$ at 0.7 for the extended Martin's model. For each set of initial conditions, we ran 100 simulations, each consisting of $n=1000$ agents. Figs \ref{fig1} and \ref{fig2} shows a sample of simulation output for each initial condition.

% For figure citations, please use "Fig" instead of "Figure".
% Place figure captions after the first paragraph in which they are cited.
\begin{figure}[!h]
\begin{adjustwidth}{-2.25in}{0in}
\includegraphics[scale=0.95]{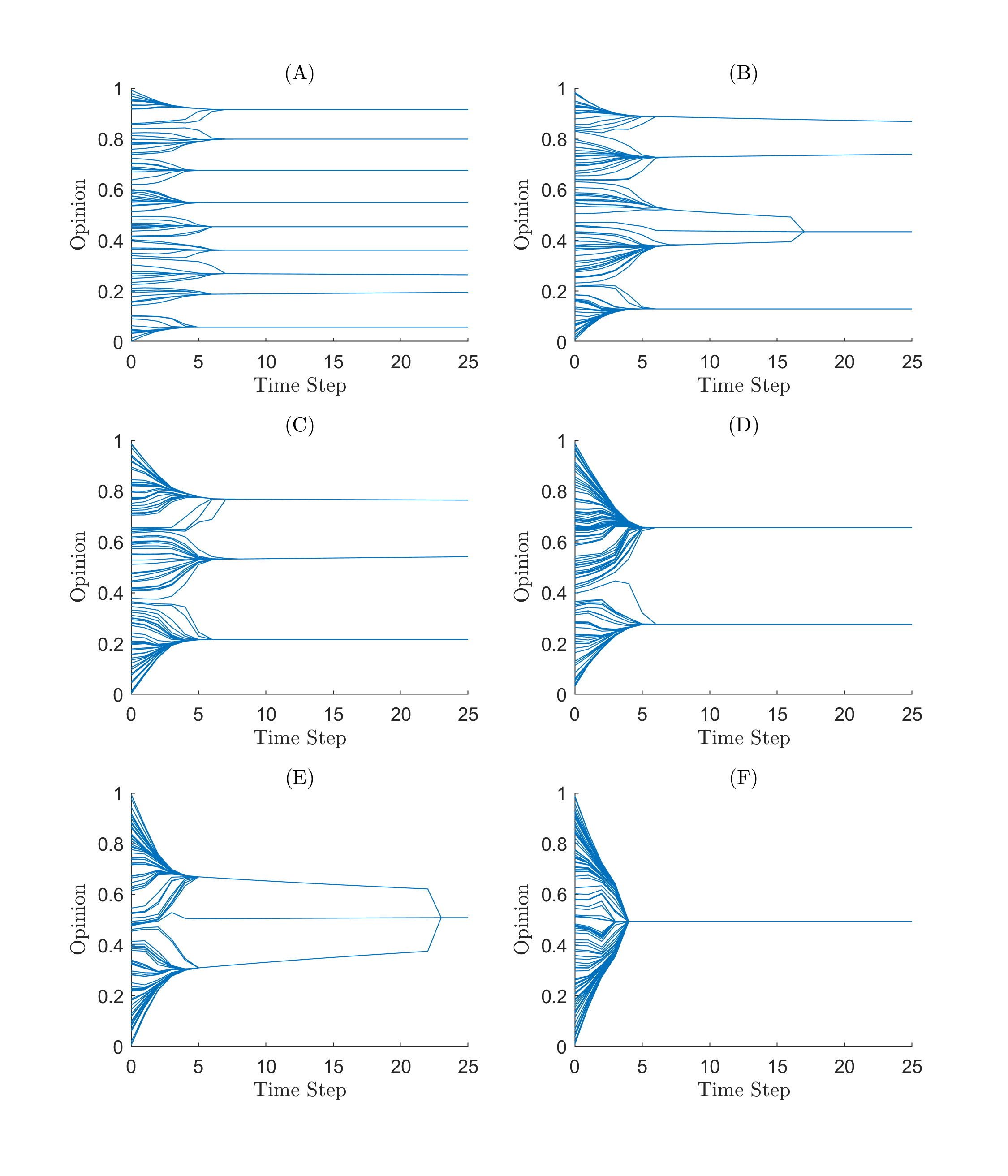}

\caption{{\bf Sample simulations' opinion shifts of agents through time for the HK bounded confidence model under different values of $\epsilon$.} \\ %}
(A) A simulation with $\epsilon = 0.05$. (B) A simulation with $\epsilon = 0.1$. (C) A simulation with $\epsilon = 0.15$. (D) A simulation with $\epsilon = 0.2$. (E) A simulation with $\epsilon = 0.25$. (F) A simulation with $\epsilon = 0.3$.}
\label{fig1}
\end{adjustwidth}
\end{figure}

\begin{figure}[!h]
\begin{adjustwidth}{-2.25in}{0in}
\includegraphics[scale=0.95]{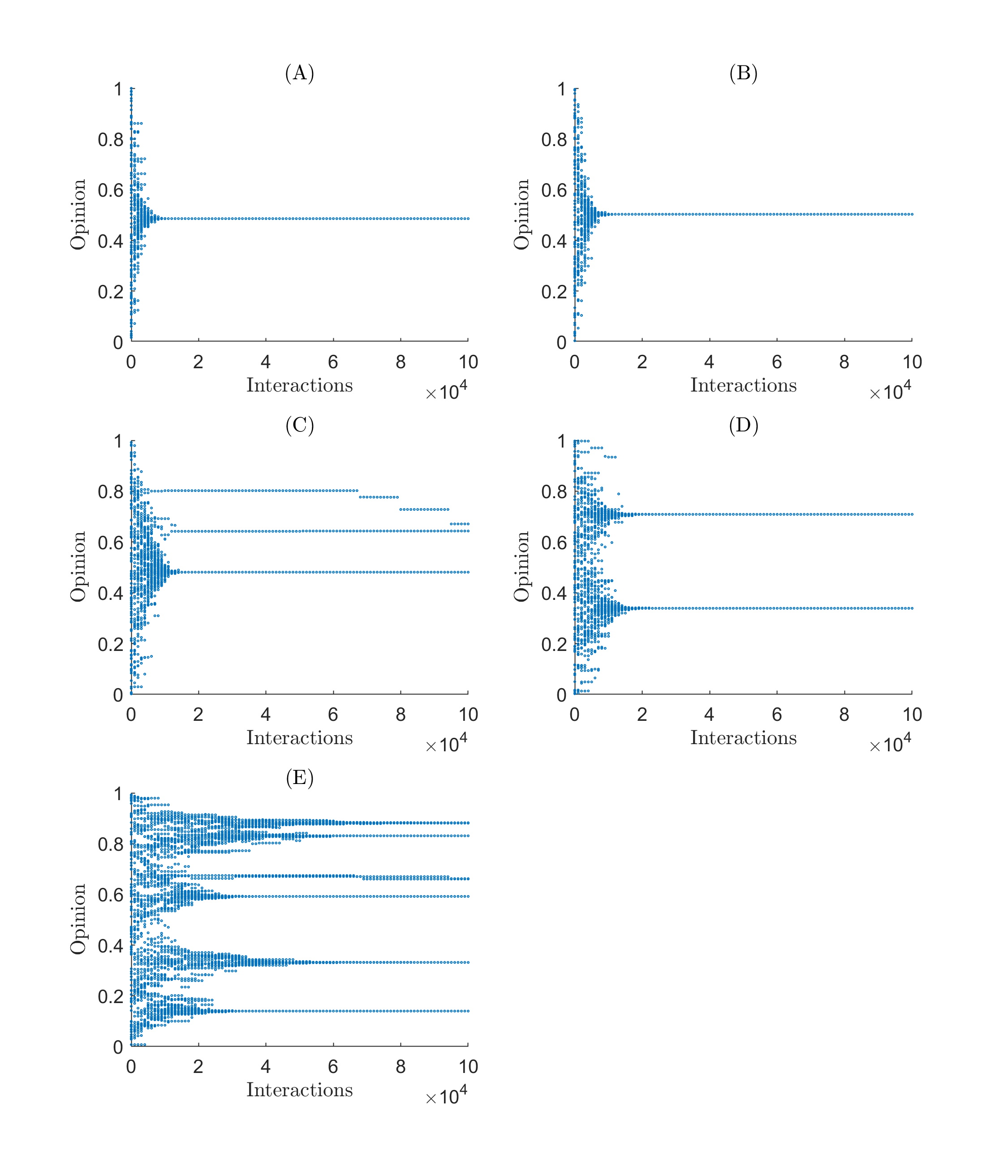}

\caption{{\bf Sample simulations' opinion shifts of agents through time for the extended Martins Model under different initial uncertainty.} \\ %}
(A) A simulation with an initial $\sigma = 0.5$. (B) A simulation with an initial $\sigma = 0.2$. (C) A simulation with an initial $\sigma = 0.14$. (D) A simulation with an initial $\sigma = 0.1$. (E) A simulation with an initial $\sigma = 0.05$.}
\label{fig2}
\end{adjustwidth}
\end{figure}

    \subsection*{Cluster counting and $y$}
We calculated the cluster counts and $y$-statistic for every simulation to compare with the new methods we developed in this paper. Tables \ref{table1} and \ref{table2} show the mean cluster count for the simulations. Both tables reveal the relationship we expect to see between the cluster count and the parameter values. The cluster count for a simulation is inversely proportional to both the initial uncertainty for the Martins model and $\epsilon$ for the HK bounded confidence model. Figure \ref{fig3} shows the $y$-statistic for the simulations. We note that the $y$-statistic seems to categorise simulations as either polarised or in consensus. Interestingly in Figure \ref{fig3}B, the $y$-statistic can identify that simulations are closer to consensus for initial uncertainty of 0.14.

\begin{table}[!ht]
%\begin{adjustwidth}{-2.25in}{0in} % Comment out/remove adjustwidth environment if table fits in text column.
\centering
\caption{
{\bf The mean cluster count for each 100 simulations of the HK Bounded confidence under different values of $\epsilon$}}
\begin{tabular}{|l+l|l|l|l|l|l|}
\hline
$\mathbf{\epsilon}$    & 0.05 & 0.1 & 0.15 & 0.2 & 0.25 & 0.3\\ \thickhline
Mean Cluster Count  & 7.52 & 3.74 & 2.64 & 1.99 & 1 & 1 \\ \hline
Standard Deviation     & 0.6432 & 0.4845 & 0.4824 & 0.1 & 0 & 0 \\ \hline
\end{tabular}
\begin{flushleft} Cluster counts were found using the inbuilt Matlab function {\tt subclust}.
\end{flushleft}
\label{table1}
%\end{adjustwidth}
\end{table}

\begin{table}[!ht]
%\begin{adjustwidth}{-2.25in}{0in} % Comment out/remove adjustwidth environment if table fits in text column.
\centering
\caption{
{\bf The mean estimated opinion cluster count for each 100 simulations of the extended Martins model under different initial uncertainty}}
\begin{tabular}{|l+l|l|l|l|l|}
\hline
{\bf Initial Uncertainty} & 0.5 & 0.2 & 0.14 & 0.1 & 0.05 \\ \thickhline
Mean Cluster Count        & 1 & 1 & 1.15 & 2.03 & 4.18 \\ \hline
Standard Deviation  & 0 & 0 & 0.3589 & 0.1714 & 0.73 \\ \hline
\end{tabular}
\begin{flushleft} Cluster counts were found using the inbuilt Matlab function {\tt subclust}.
\end{flushleft}
\label{table2}
%\end{adjustwidth}
\end{table}

\begin{figure}[!h]
\begin{adjustwidth}{-2.25in}{0in}
\includegraphics[scale=1]{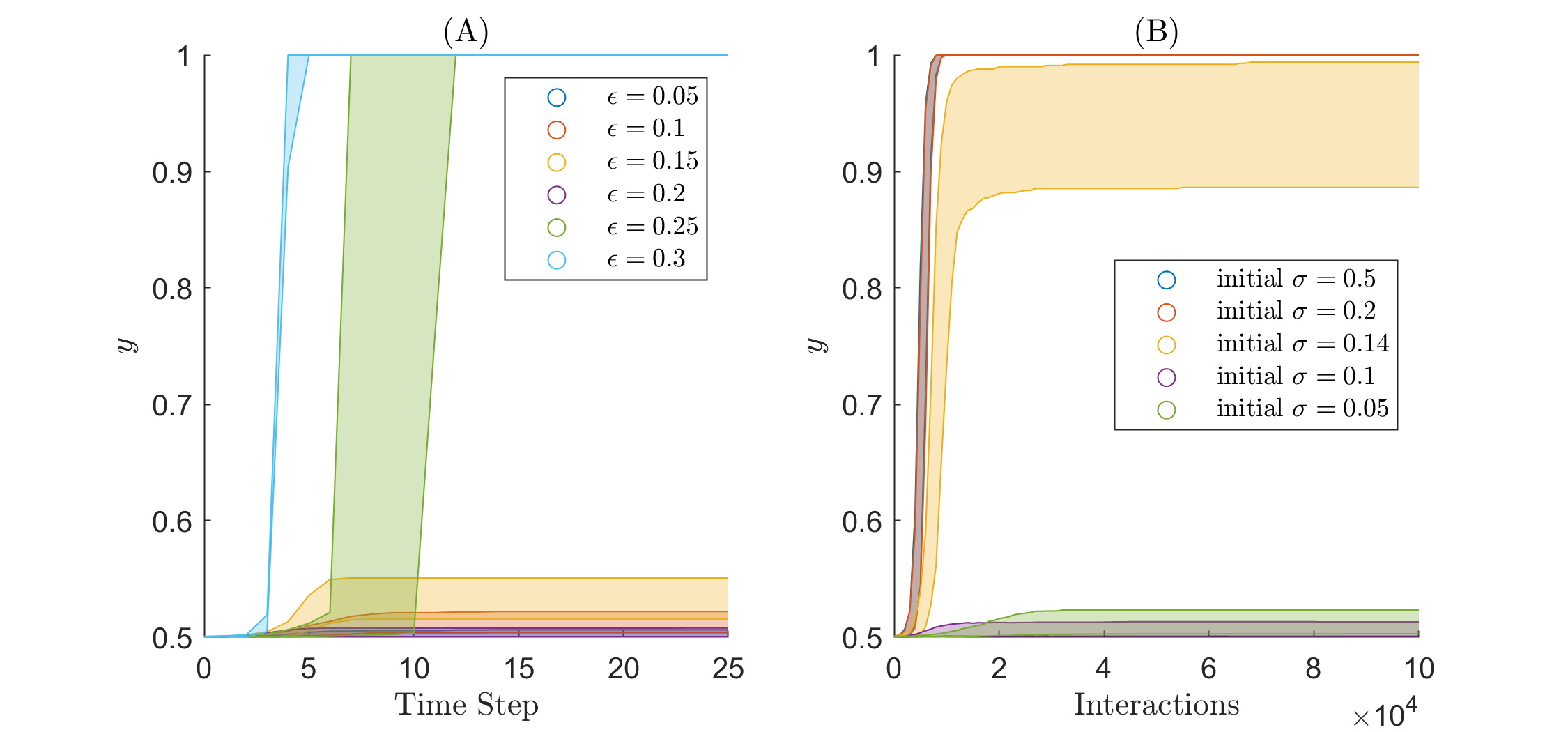}

\caption{{\bf The $y$-statistic of simulation for the two different models and different parameter values.} \\ %}
(A) HK bounded confidence simulations. (B)  Extended Martins model simulations.}
\label{fig3}
\end{adjustwidth}
\end{figure}

%% Result of running the counting method on the data

    \subsection*{The Min-Max flow/Edge Connectivity}
The min-max flow rate is a bounded measure of polarisation which diverges to $0$ at polarisation or $n-1$ at a consensus, where $n$ is the number of agents in the simulation. Finding the min-max flow rate or edge connectivity is the most computationally intense method of measuring polarisation, and measures of min-max flow rate over individual simulations are noisy, regardless of simulation size $n$. Thus averaging over the 100 simulations produced constant behaviour as an illustration of the method's utility. 
    
Fig \ref{fig4} shows the edge connectivity of the H-K bounded confidence model at different values of $\epsilon$. For $\epsilon \leq 0.2$, the edge connectivity diverges to 0 at the steady-state, indicating that the resulting graph is disjoint, and the simulation has polarised. Note some outlying simulations where $\epsilon = 0.2$ converged to consensus. 

\begin{figure}[!h]
\includegraphics[scale=1]{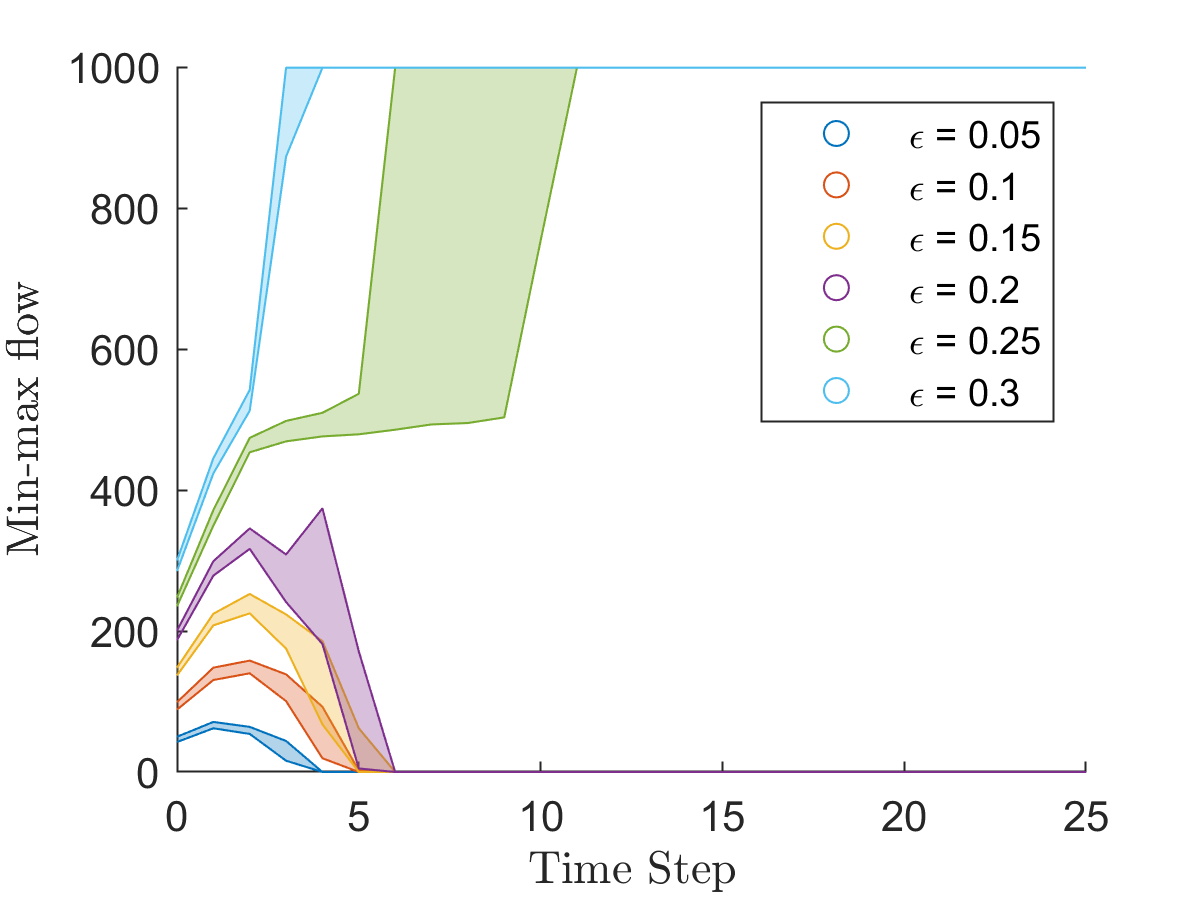}

\caption{{\bf Min-max flow, $k$, through time of the simulations that used the H-K Bounded confidence model with different values of $\epsilon$.}}
\label{fig4}
\end{figure}

For values of $\epsilon > 0.2$, the edge connectivity increased to $n$, suggesting consensus (i.e. every node directly connects to every other node). The $\epsilon = 0.25$ took longer to reach consensus (five to ten iterations) compared to the three iterations when $\epsilon = 0.3$.  The longer time for the $\epsilon = 0.25$ simulation to reach consensus suggests that when $\epsilon\approx 0.25$ simulations can transition between consensus and polarisation. 

Fig \ref{fig5} shows the edge connectivity of the Martins confidence model at different values of initial $\sigma$. For initial $\sigma \leq 0.14$ edge connectivity decreased to 0, which contrasts with Fig \ref{fig4} where edge connectivity reached a local minimum. To further compound this difference, Fig \ref{fig4} shows that the larger values of $\epsilon$ bound the edge connectivity of smaller values. Such a pattern does not exist in Fig \ref{fig5}.

\begin{figure}[!h]
\includegraphics[scale=1]{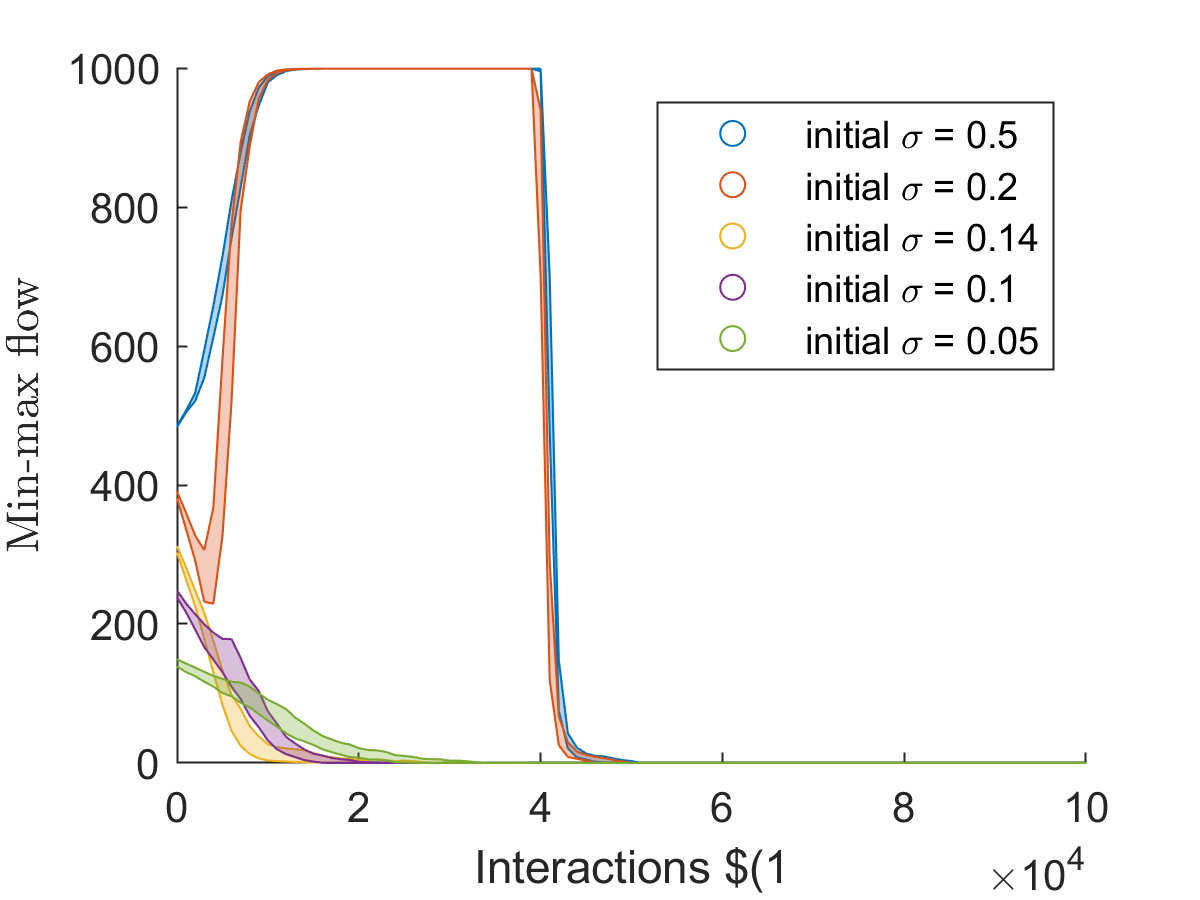}

\caption{{\bf Min-max flow, $k$, through time of the simulations that used the Martins model with different values of initial uncertainty.}}
\label{fig5}
\end{figure}

For initial $\sigma > 0.2$ edge connectivity increases to 1000, but at initial $\sigma = 0.2$ the progression to 1000 is not monotonic. Eventually, the edge connectivity of all simulations fell to $0$. This behaviour is consistent with the analysis from \cite{BSbayesianupdating2009Martins}, where the model, in the long term, was demonstrated to approach consensus arbitrarily close before fragmenting into opinion clusters. Edge connectivity decreases to a local minimum for the first few thousand interactions before increasing to 1000. Of note is that in the lower 25\% quartile of simulations, with initial $\sigma = 0.2$, edge connectivity continued decreasing and reached 0. Falling to a local minimum when $\sigma = 0.2$ suggests that $\sigma = 0.2$ is close to a bifurcation point between polarisation and consensus.

    \subsection*{Spectral Radius}
Figure \ref{fig6} displays the behaviour of the spectral radius in the H-K bounded confidence model. In most cases, simulations reach a steady-state value of spectral radius after three iterations of the simulation, the exception being $\epsilon = 0.25$ and $\epsilon = 0.15$. At $\epsilon = 0.25$, the simulations first converge to a spectral radius of 500, but for some simulations, after 2 - 7 iterations, the spectral radius jumps to 1000. At $\epsilon = 0.15$ the simulations' spectral radius converges at either where $\epsilon = 0.1$ converges or $\epsilon = 0.2$ converges.

\begin{figure}[!h]
\includegraphics[scale=1]{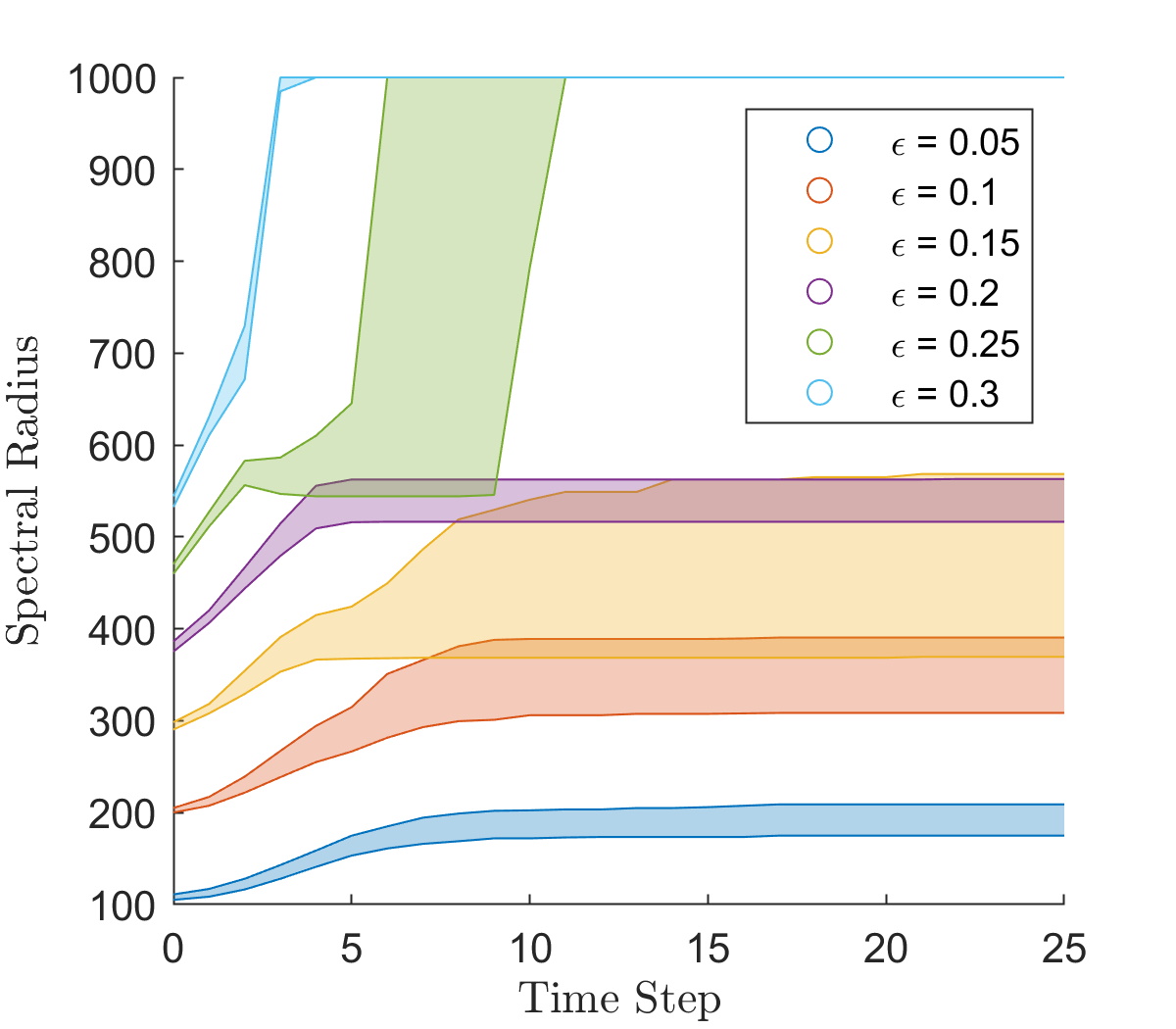}

\caption{{\bf Spectral radius through time of the simulations that used the H-K Bounded Confidence model with different values of $\epsilon$.}}
\label{fig6}
\end{figure}

Fig \ref{fig7} displays the spectral radius of the extended Martins model under several different parameterisations. The results show two phases of behaviour for the simulations' spectral radii. First, the simulations reach steady-state, and second, they begin to fragment in their opinion clusters. There is little variability for initial $\sigma$ of 0.5 or 0.2, where the simulations reached consensus, except when the simulations enter the second phase, where the spectral radii vary greatly. At initial $\sigma$ 0.1 and 0.05, the simulations polarised and took longer to reach the second phase. In the second phase, all simulations seem to drop between the same value in spectral radius, between 150 and 350. Potentially there is some structure to how agents fragment in the extended Martins model.  

\begin{figure}[!h]
\includegraphics[scale=1]{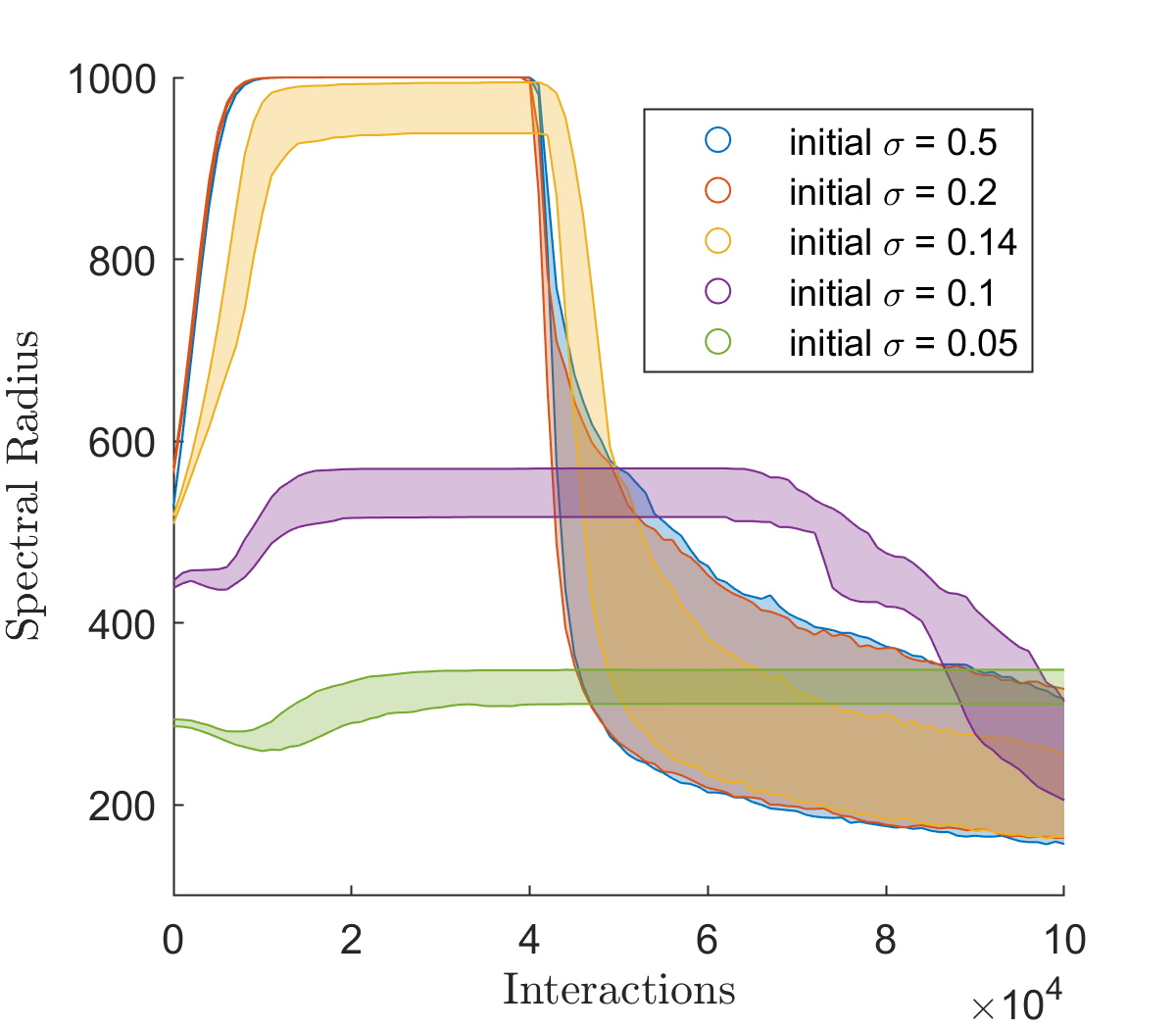}

\caption{{\bf Spectral radius through time of the simulations that used the Martins model with different values of initial uncertainty.}}
\label{fig7}
\end{figure}

What is of particular interest is the results for initial $\sigma = 0.14$. In \cite{BSbayesianupdating2009Martins}, initial $\sigma = 0.14$ was the critical value at which the simulations polarised, when one opinion cluster turned into two. The results in Fig \ref{fig7} show that at initial $\sigma = 0.14$, rather than dividing into two even opinion clusters (as implied in \cite{BSbayesianupdating2009Martins}), a small portion of agents (approximately $10$ - $100$) break away to form their cluster while the rest remain in consensus. The spectral radius highlights the continuous transition between consensus and polarisation while counting opinion clusters obfuscates this behaviour, although the $y$-statistic and averaging over 100 simulations can identify this behaviour.

\subsection*{Mean K-L Divergence}
%This section discusses the mean K-L divergence between every agent in the simulations. The general behaviour of the mean K-L divergence in the Martins model was to grow exponentially.

Fig \ref{fig8} shows the behaviour of the mean K-L divergence in the Extended Martins Simulations. During the beginning of the simulations, K-L divergence inversely correlates with the initial $\sigma$. Later in the simulation, K-L divergence grows exponentially. There are two phases to the mean K-L divergence in Fig \ref{fig8}, similar to the spectral radius. In the first phase, K-L divergence grows at a fixed exponential rate which we observe as a linear trend in Fig \ref{fig8}. The next phase has that fixed exponential growth rate decrease. These phases are present in all of the simulations. The transition between the first and second phases appears to happen at the same amount of divergence. For $\sigma \geq 0.2$, there is no exponential growth until the simulations start fragmenting, and then the simulation seems to pick a random rate of exponential expansion.

\begin{figure}[!h]
\includegraphics[scale=1]{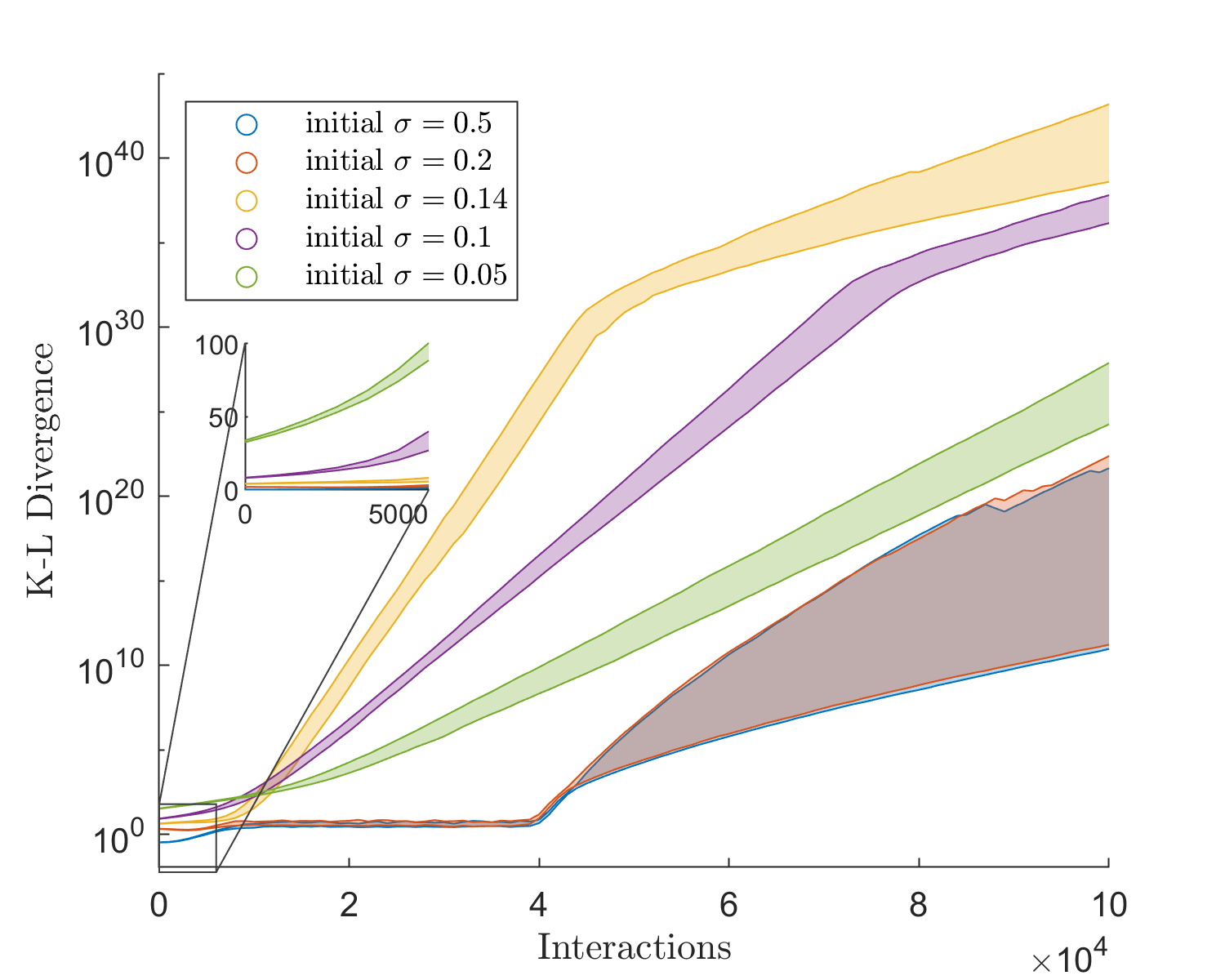}

\caption{{\bf Mean K-L divergence through time of the simulations that used the Martins model with different values of initial uncertainty.}}
\label{fig8}
\end{figure}

\subsubsection*{Interpreting the exponential expansion}

There is a link between the rate of exponential expansion and the number of opinion clusters in a simulation since the exponential expansion rate correlates with initial $\sigma$ (except for simulations which reached consensus), which then is inversely correlated with the number of opinion clusters as seen in Fig \ref{fig1}. We shall now develop this further in this section.

The reason for the exponential growth of $\overline{\DKL}$ is because this term
\begin{align}
    \frac{\sigma_i^2 + (x_i-x_j)^2}{2\sigma_j^2} \label{eqn:DKLgrowthterm}
\end{align}
in Eq \ref{eqn:DKLnorm}. In the late stages of the Martins model, when agents successfully interact, both agents will half their `variance' (uncertainty squared), causing the $\DKL$ they share with other agents to double. We can use this fact to estimate the number of clusters in a simulation by fitting a linear regression to the $\logof{\overline{\DKL}}$. Where $m$ is the gradient of the linear regression and $n$ is the number of agents in a simulation, we found the general expression for the effective estimated cluster count to be
\begin{align}
    \widehat{\psi} = \frac{\logof{n+2}-\logof{n}}{m}. \label{eqn:psi}
\end{align}
\nameref{S6_Appendix} provides a more detailed deviation of Eq \ref{eqn:psi}.

Table \ref{table3} shows the result of applying Eq \ref{eqn:psi} for simulations that generated more than one opinion cluster. These estimates are close to the number of opinion clusters in Table \ref{table2} for their appropriate initial uncertainties. Moreover, the variance is significantly lower than in Table \ref{table2}.

\begin{table}[!ht]
%\begin{adjustwidth}{-2.25in}{0in} % Comment out/remove adjustwidth environment if table fits in text column.
\centering
\caption{
{\bf The mean estimated opinion cluster count for each 100 simulations under different initial uncertainty}}
\begin{tabular}{|l+l|l|l|l|}
\hline
{\bf Initial Uncertainty} & 0.14 & 0.1 & 0.05\\ \thickhline
Mean $\widehat{\psi}$         & 1.1055 & 1.91751 & 3.8231 \\ \hline
Standard Deviation & 0.1644 & 0.1773 & 0.5745 \\ \hline
\end{tabular}
%\begin{flushleft} Table notes Phasellus venenatis, tortor nec vestibulum mattis, massa tortor interdum felis, nec pellentesque metus tortor nec nisl. Ut ornare mauris tellus, vel dapibus arcu suscipit sed.
%\end{flushleft}
\label{table3}
%\end{adjustwidth}
\end{table}

%The deviation of $a$ becomes untenable because the algebra will result in 0/0 when $\psi = 1$. In theory $q=0$ makes $a = 1$ forcing $m=0$, but $m \rightarrow 0$ co-responds to larger $\psi$. 
In practice, Fig \ref{fig8}A shows that simulations which reach consensus have no exponential growth, hence $m=0$, and the derivation of Eq \ref{eqn:psi} no longer applies. 

\subsection*{Mean Hellinger Distance}
The mean Hellinger distance behaves similar to the spectral radius. The mean Hellinger distance results mirror the same features found in the spectral radius results. The mean Hellinger distance in general varied less across simulations.

Fig \ref{fig9} shows the mean Hellinger distance across the various H-K bounded confidence simulations. The mean Hellinger distance converges within five time steps of a simulation with expectations for $\epsilon = 0.25$, converging within ten time steps. As discussed in the spectral radius results section, $\epsilon = 0.25$ is close to a tipping point between polarisation and consensus. The mean Hellinger distance varies more when there is less polarisation, excluding when simulations reach consensus. At $\epsilon = 0.05$, there is little variation in mean Hellinger distance between simulations, whereas, at $\epsilon = 0.20$, there is more variation in mean Hellinger distance between simulations. We postulate that when a simulation splinters into many opinion clusters, the average distance between clusters remains constent, whereas, when there are only two clusters, those two clusters can be close or on opposite ends of opinion space. 

\begin{figure}[!h]
\includegraphics[scale=1]{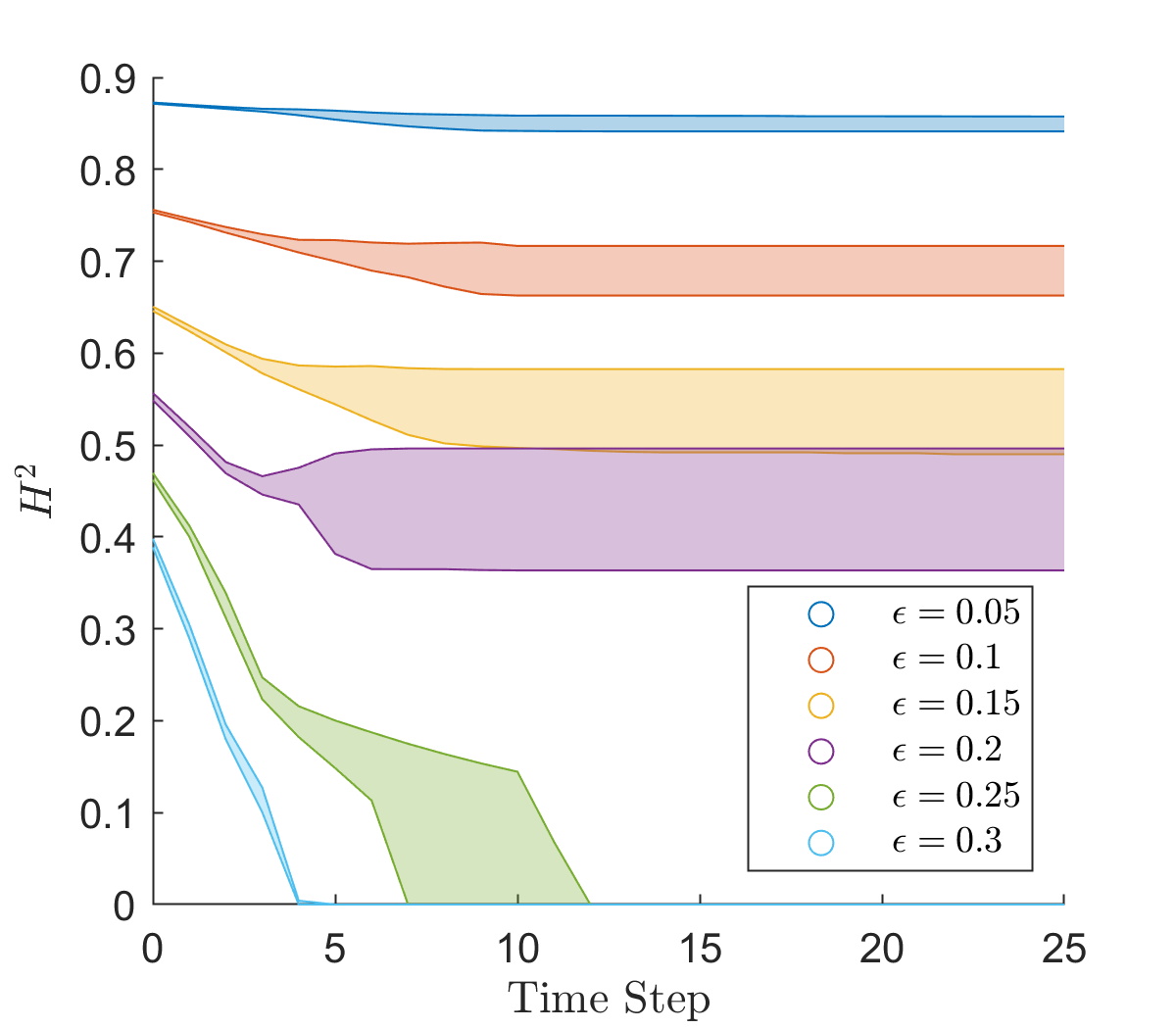}

\caption{{\bf Mean Hellinger distance through time of the simulations that used the H-K bounded confidence model with different values of $\epsilon$.}}
\label{fig9}
\end{figure}

Fig \ref{fig10} shows the mean Hellinger distance across the various Martins simulation simulations. The mean Hellinger distance for Martins largely follows the behaviour of spectral radius for Martins. The mean Hellinger distance differs in one way from the spectral radius. No simulations reached a `consensus' with mean Hellinger distance measuring polarisation. The closest a simulation comes to consensus is a mean Hellinger distance of above $0.1$. Initially, simulations with initial $\sigma = 0.5$ began with a mean Hellinger distance below $0.1$, which converged to a value above $0.1$.

\begin{figure}[!h]
\includegraphics[scale=1]{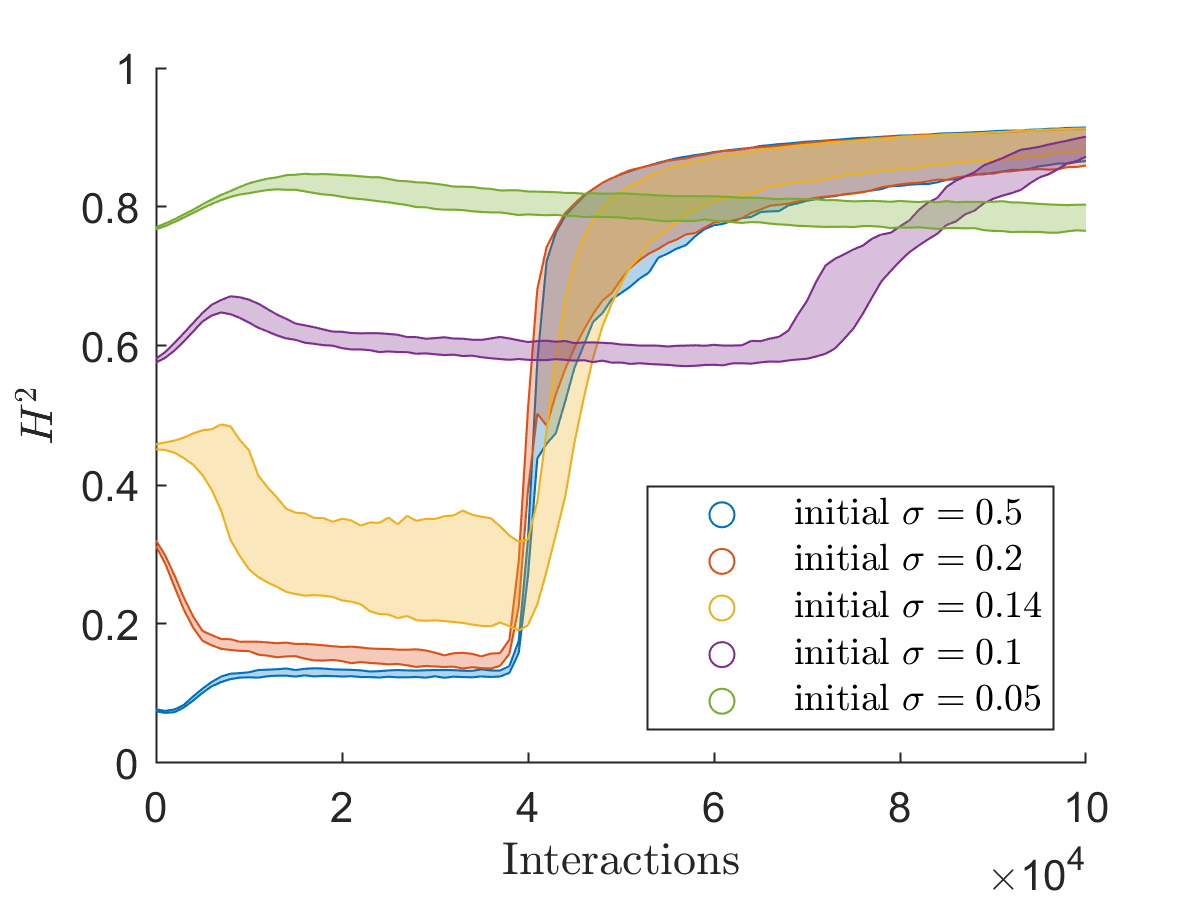}

\caption{{\bf Mean Hellinger through time of the simulations that used the Martins model with different values of initial uncertainty.}}
\label{fig10}
\end{figure}

%%%%%%%%%%%%%%%%%%%%%%%%%%%%%%%%%%%%%%%%%%%%%%%%%%%%%%%%%%%%%%%%%%%%

\section*{Discussion}
This paper investigated four methods of determining polarisation, the min-max flow rate, spectral radius, the mean Kullbeck-Liebler divergence and the Hellinger distance. The spectral radius and min-max flow methods use graph theory concepts to form the basis for measuring polarisation. As a consequence, both have physical interpretations based on network topology. The min-max flow rate is the minimum number of paths between all vertex pairings. The spectral radius relates to the largest size of an opinion cluster, among other graph theory concepts such as the number paths length $k$. As a result, both measure different aspects of the connectivity of a graph, and through their different approaches, we can uncover more understanding of polarisation.  

The min-max flow rate is limited as a dichotomous measurement of polarisation and is useful when the network is not in a steady state for tracking the trajectory of polarisation. The min-max flow rate is a non-robust measure, as it is sensitive to outliers. We would need to prune real-world networks to remove outliers to use the min-max flow rate as a measure of polarisation. The advantage of the min-max flow rate is that its definition is conceptually straightforward to interpret. Spectral radius is more versatile than min-max flow rate and is more robust to outliers, making it more practical to apply to real-world networks. As a non-dichotomous measure, the spectral radius can be used to monitor trajectories of polarisation and as a comparison between networks in a stead-state. The drawback of using spectral radius as a measure of polarisation is that its derivation is more challenging and esoteric. 

The Kullbeck-Liebler divergence (KLD) is an information-theoretic measure of information loss (or gain). The KLD measures the divergence between two probability density (or mass) functions. In the Martins model, individual opinions are represented as probability density functions, leading to the KLD as a natural measure to consider as the difference or distance between opinions.  
The mean KLD is the average of all pairwise KLD for a network. It is important to note that, unlike the min-max flow rate and spectral radius, the mean KLD doesn't account for network topology (min-max flow rate and spectral radius make use of information about the individuals' opinions via the emergent social network structure). Therefore, the mean KLD is limited to circumstances where opinions are probability distributions. What limits the mean KLD further is that it can't handle definite probability distributions. It is impossible to meaningfully calculate the mean KLD for the HK bounded confidence model because it requires the calculation of the KLD between two uniform distributions resulting in an infinite KLD.

We noted that the exponential growth rate of the KLD (Eq \ref{eqn:meanDKL}) after the simulation achieved steady-state is inversely proportional to the number of clusters at steady-state. Thus the slope of the log-linear model of a simulation, $m$, can be used as a measure of polarisation by producing an effective cluster count. The effective number of clusters  $\widehat{\psi}$ from \eqref{eqn:psi} agrees with the spectral radius interpreted polarisation, including the measured cluster counts in Table \ref{table2}. For example, given a simulation with $\sigma = 0.14$, $\widehat{\psi} = 1.1055$ indicating that the agents are ``mostly'' in consensus, but that there is still some disagreement. The conclusions of this measure match the conclusions drawn from the spectral radius and what we observe in Fig \ref{fig2} and Table \ref{table2}, where simulations converge to (``mostly'') a single cluster. Compared with the spectral radius, which describes the size of the largest cluster, $\widehat{\psi}$ is more comprehensive as it describes the effective number of clusters for the simulation at steady-state. A weakness of $\widehat{\psi}$ is that it is ill-defined when the actual cluster counts $\psi = 1$. As seen in Fig \ref{fig8} the two types of simulations that reached consensus exhibited no exceptional growth when they first reached steady-state, making $m=0$, which breaks Eq \ref{eqn:psi} making $\widehat{\psi}\rightarrow \infty$. Considering this instability only occurs when a simulation reaches consensus, it is easy to ignore since we can identify consensus visually. Although, simulation types close to consensus will inherit some instability since some of the simulations will fall into consensus through random chance. Of more pressing concern is that $\widehat{\psi}$ tied in with the Martins model and $\DKL$, which could limit the applicability of $\widehat{\psi}$ to more realistic situations if the Martins model does not reflect how individuals share opinion. Still $\widehat{\psi}$ hints at the possibility of a continuous extension to counting clusters.

The mean Hellinger distance is similar to the spectral radius. The only significant deviation from the spectral radius is that the mean Hellinger distance never reached zero in the Martins simulations. Because the Hellinger distance is relative to uncertainty (Eq \ref{eqn:Hdistance}), like the Martins updating rules, it could determine that  Martins simulations were never in complete agreement. This novel ability suggests that the Hellinger distance has an advantage over the spectral radius. Where the spectral radius might determine a group to be in complete consensus, the Hellinger distance can correctly determine that the group is not in complete consensus.

The spectral radius of the H-K Bounded confidence in Fig \ref{fig6} is consistent with the results in \cite{BCboundedconfidence2002Hegselmann}, showing that in the homogeneous case, the model forms uniformly spaced opinion clusters, with the space between them being greater than $\epsilon$. For the H-K bounded confidence model, the spectral radius reflects this discrete nature of opinion clusters, Fig \ref{fig6} shows the spectral radius converging to quantised values depending on $\epsilon$. 
%Values of $\epsilon = 0.05, 0.1, 0.2$ and 0.3 show these quantised values of spectral radius. 
The spectral radius can track simulations when they fall between two quantised states. We can see in Fig \ref{fig1} that simulations with values of $\epsilon = 0.15$ and $\epsilon = 0.25$ have opinion cluster merge later in the simulation. This is reflected in the spectral radius through Fig \ref{fig6} as the spectral radius at $\epsilon = 0.15$ and $\epsilon = 0.25$ has more variability and stretch over the neighbouring values of $\epsilon$. So the spectral radius can determine bifurcation points in parameter values, i.e.\ parameter values at which two opinion clusters merge into one.

Of particular note is the late stage behaviour of Figs \ref{fig10}, \ref{fig8} and \ref{fig7}. The Martins model is known to fragment at the late stages of a simulation, but the fragmentation doesn't result in complete disunity (where every agent is isolated from all other agents). From Fig \ref{fig7} the simulations drop to a spectral radius between 200 and 400. It would be interesting to investigate the social network in the late stages of the Martins model.

\section*{Conclusion}

This paper has investigated four methods of measuring polarisation. We conclude that the min-max flow rate is the most insufficient method. The main advantage of the method is that it is intuitive. Although the min-max flow rate does reveal some dynamics as a simulation falls into either polarisation or consensus, simulations can only be in consensus or polarisation. Overall the method at most performs equivalently to the $y$-statisitc. The method is also extremely sensitive to the outlying agent, which is a problem in any real-world application. The major complication with the method is the computation time which makes the method less useful. Although, we briefly investigated a method to improve computational efficiency that resulted in a new measure of polarisation (see \nameref{S3_Appendix}).

In contrast, the spectral radius provided a complete picture of polarisation. The method's physical meaning is loose and difficult to understand, but we understand its meaning as the effective largest cluster size in a simulation, i.e.\ what the largest cluster would be if the simulation reached steady-state. As a result, the spectral radius places polarisation on a continuum and can identify when a simulation is close to reaching a consensus. Furthermore, the method can distinguish between different levels of polarisation, i.e.\ three cluster simulations from two cluster simulations. Essentially the spectral radius blends cluster counting methods and the $y$-statistic together. So we consider the spectral radius an effective at measuring polarisation.

The mean K-L divergence is interesting because the results were initially difficult to interpret. The mean K-L divergence diverged to infinity exponentially, but from the exponential expansion rate, we could determine the `number of clusters' $\psi$ of a particular simulation. The estimated cluster number agreed with what we observed with other measures and the raw results. It is clear that $\psi$ is what we can use to measure polarisation, and it is more intuitive to grasp compared to the spectral radius, but the measure has its drawbacks. First, it breaks when a simulation reaches consensus and second, it relies on the K-L divergence and the mechanics of the Martins model, which in real-world applications might not hold. Still, this method promises a way to express opinion cluster number as a continuous value which would be another avenue of research.

The mean Hellinger distance closely resembles the spectral radius but differs from the spectral radius in one crucial way. The Hellinger distance is more sensitive at complete consensus; it had more foresight into the Martins model degeneration from consensus into arbitrarily close opinion clusters than other methods. The mean Hellinger distance has an advantage over the spectral radius with its ability to detect the finer dynamics of the Martins model sooner.

Overall these four methods measure different aspects of polarisation, and the individual measures fail to capture the whole process of polarisation, but together they reveal the complete picture. Depending on the circumstances, certain methods might be more effective than others. The spectral radius is the most general and can be applied in most situations, whereas the mean Hellinger distance and K-L divergence work better when applied to their appropriate niches.

In this paper, we have only looked at simulated societies. Future research would involve applying these measures to real data sets. It is clear from the investigations in this paper that the mean K-L divergence, Hellinger distance and spectral radius hold the most promise.% Another avenue of research would be to delve more into the spectral analysis of a social network formed from opinion dynamics. 

\section*{Supporting information}

\paragraph*{S1 Fig.}
\label{S1_Fig}
{\bf Alternate min-max flow compared with min-max flow for HK bounded confidence.} 
\includegraphics[scale=1]{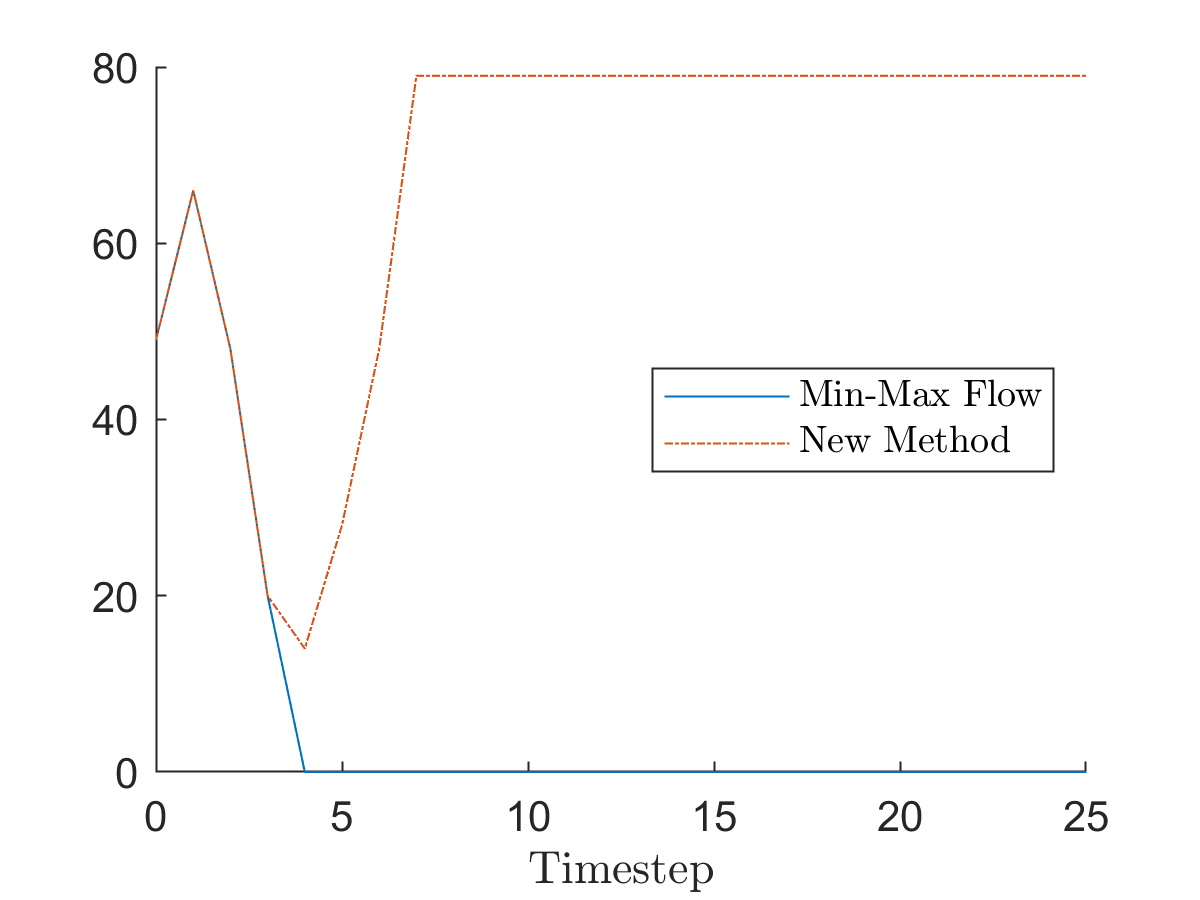}

% Include only the SI item label in the paragraph heading. Use the \nameref{label} command to cite SI items in the text.
\paragraph*{S1 Appendix.}
\label{S1_Appendix}
{\bf Basics of Graph Theory.} A graph $G$ consists of a set of vectors, which we will call nodes, $V$ where $v_i\in V$ is the $i$th node of $G$ and a set of edges $E$ where $e_{ij} \in E$ means that $v_i,v_j \in V$ are connected in $G$. We can define a path $P$ in $G$ to be a set of edges in $G$ such that if you begin at $v_i$ and follow the edges in $P$ you will finish at $v_j$ such that $P = \{e_{i1}, e_{12}, \ldots, e_{nj}\}$. A component is defined as the set of nodes $V_i\subset V$ in $G$ such that paths exist between all nodes of $V_i$, but no path exists to nodes outside of $V_i$. A graph $G$ is disjoint if $G$ has more than one component; likewise, $G$ is `connected' when only one component exists. A directed graph, digraph, is a graph $G$ that has its edges $E$ have a direction, i.e. $v_1$ connects to $v_2$, but $v_2$ doesn't necessarily connect back to $v_1$. A weighted graph is where for a graph $G$ there exists a function $f:E\rightarrow\mathbb{R}$.

\paragraph*{S2 Appendix.}
\label{S2_Appendix}
\begin{theorem}
%Let $A$ be an $n\times n$ block diagonal matrix such that 
Let $A$ be a square block diagonal matrix consisting of $m$ square matrices of ones $\ones{n_i}$ where $n_i$ is the dimension of the matrix of ones $i$, and $i=1,\ldots,m$. Then the spectral radius $\rho(A)$ is equal to the dimension of the largest unit matrix in $A$. \label{thm:ones}
\end{theorem}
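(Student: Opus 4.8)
The plan is to exploit two elementary facts: that the spectrum of a block diagonal matrix is the multiset union of the spectra of its diagonal blocks, and that each all-ones block $\ones{n}$ has exactly one nonzero eigenvalue, equal to $n$. Combining these pins down the full spectrum of $A$, after which the spectral radius is read off directly.

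First I would record the block-diagonal fact. If $A = \mathrm{diag}\bigl(\ones{n_1},\ldots,\ones{n_m}\bigr)$, then for any scalar $\lambda$ the matrix $A-\lambda I$ is again block diagonal with blocks $\ones{n_i}-\lambda I$, so its determinant factorises as $\prod_{i=1}^{m}\det\bigl(\ones{n_i}-\lambda I\bigr)$ (by cofactor expansion, or by multiplicativity of the determinant on block-triangular matrices). Hence $\lambda$ is an eigenvalue of $A$ if and only if it is an eigenvalue of at least one block $\ones{n_i}$, so the spectrum of $A$ is the union, with multiplicities, of the spectra of the $\ones{n_i}$.

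Next I would compute the spectrum of a single block $\ones{n}$. Writing $\mathbf{1}$ for the all-ones column vector of length $n$, we have $\ones{n} = \mathbf{1}\,\mathbf{1}^{\top}$, which has rank one; therefore $0$ is an eigenvalue of algebraic multiplicity $n-1$. For the remaining eigenvalue, either take the trace ($\mathrm{tr}\,\ones{n} = n$, forcing the last eigenvalue to be $n$) or observe directly that $\ones{n}\mathbf{1} = n\mathbf{1}$, exhibiting $n$ as an eigenvalue with eigenvector $\mathbf{1}$. So the spectrum of $\ones{n}$ is $\{\,n,0,\ldots,0\,\}$ with $0$ repeated $n-1$ times; this is valid also for $n=1$, where $\ones{1}=[1]$ has the single eigenvalue $1$.

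Putting the pieces together, the spectrum of $A$ is $\{n_1,n_2,\ldots,n_m\}\cup\{0\}$ (with multiplicities). Since every block has dimension $n_i\ge 1>0$, all eigenvalues are nonnegative, and the one of largest absolute value is $\max_i n_i$, i.e.\ the dimension of the largest all-ones block. Therefore $\rho(A)=\max_i n_i$, as claimed. I do not anticipate a genuine obstacle: the only points needing a little care are justifying the determinant factorisation for block diagonal matrices and explicitly allowing blocks of size $n_i=1$ so that ``matrix of ones'' still makes sense; everything else is immediate from the rank-one structure of $\ones{n}$.
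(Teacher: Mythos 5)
Your proposal is correct and follows essentially the same route as the paper's proof: decompose the spectrum of the block diagonal matrix into the union of the blocks' spectra, note that each $\ones{n_i}$ has eigenvalues $n_i$ and $0$ (with multiplicity $n_i-1$), and read off $\rho(A)=\max_i n_i$. The only cosmetic difference is that you derive the spectrum of $\ones{n}$ from the rank-one factorisation $\mathbf{1}\mathbf{1}^{\top}$ rather than quoting the characteristic polynomial $(n-\lambda)\lambda^{n-1}$ as the paper does, and you are somewhat more explicit about the determinant factorisation and the $n_i=1$ case.
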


\begin{proof}
Since $A$ is block diagonal, the eigenvalues of $A$ are the eigenvalues of $\ones{n_1},\ldots,\ones{n_m}$. We know that for a general square matrix of ones $\ones{k}$ \cite{Mcombinatorics2013Stanley} its characteristic equation is
$$
0 = \left(k - \lambda\right)\lambda^{k-1}.
$$
The dimensional values $n_1,\ldots,n_m$ are all eigenvalues of their respective matrix including multiple eigenvalues equal to $0$. Then $n_1,\ldots,n_m$ must be eigenvalues of $A$, the largest of which is $\rho(A)$, which is also the dimension of the largest matrix of ones in $A$.
\end{proof}

\paragraph*{S3 Appendix.}
\label{S3_Appendix}
{\bf Alternate min-max flow for HK bounded confidence.} To simplify the calculation of the min-max flow rate for the HK bounded confidence model, one might consider the following method:
\begin{enumerate}
    \item For every agent in a simulation, count the number of agents within $\epsilon$ of the agent's opinion.
    \item Minimise over those agent counts.
\end{enumerate}
Such a method counts the degree of each agent if you were to develop an adjacency matrix at that specific time in the simulation. Intuitively the method should be identical to min-max flow, but it produces different results when a simulation enters polarisation (see \nameref{S1_Fig} for an example). This method converges to the smallest cluster size instead of reaching zero. We thus consider this method distinct from the min-max flow method.

\paragraph*{S4 Appendix.}
\label{S4_Appendix}
{\bf Derivation of the KLD of two normal distributions.} 
Let $X ~ \sim \mathcal{N}(\mu_i,\,\sigma_i^{2})$ and $Y \sim \mathcal{N}(\mu_j,\,\sigma_j^{2})$, and $f(x)$ and $g(y)$ describe the probability density function for $X$ and $Y$ then
\begin{align*}
    \DKL(X,Y) &= \int_{-\infty}^{\infty}f(x) \logof{\frac{f(x)}{g(x)}}\ \mathrm{d}x \\
              &= \int_{-\infty}^{\infty}f(x) \left[\logof{\frac{\sigma_j\sqrt{2\pi}}{\sigma_i\sqrt{2\pi}}e^{-\frac{\left(x-\mu_i\right)^2}{2\sigma_i^2}+\frac{\left(x-\mu_j\right)^2}{2\sigma_j^2}}}\right]\ \mathrm{d}x\\
              &= \int_{-\infty}^{\infty}f(x) \left[\logof{\frac{\sigma_j}{\sigma_i}}-\frac{\left(x-\mu_i\right)^2}{2\sigma_i^2}+\frac{\left(x-\mu_j\right)^2}{2\sigma_j^2}\right]\ \mathrm{d}x \\
              &= \begin{multlined}[t][10cm]\logof{\frac{\sigma_j}{\sigma_i}}\int_{-\infty}^{\infty}f(x)\ \mathrm{d}x - \frac{1}{2\sigma_i^2}\int_{-\infty}^{\infty}\left(x-\mu_i\right)^2 f(x)\mathrm{d}x\\ + \frac{1}{2\sigma_j^2}\int_{-\infty}^{\infty}\left(x-\mu_j\right)^2 f(x)\mathrm{d}x \end{multlined} \\
              &= \logof{\frac{\sigma_j}{\sigma_i}}- \frac{1}{2\sigma_i^2}\Var{X} + \frac{1}{2\sigma_j^2}\int_{-\infty}^{\infty}\left(x^2-2\mu_j x+\mu_j^2\right) f(x)\ \mathrm{d}x \\
              &= \logof{\frac{\sigma_j}{\sigma_i}}- \frac{1}{2} + \frac{1}{2\sigma_j^2}\int_{-\infty}^{\infty}\left(\left(x-\mu_i\right)^2+2x\left(\mu_i-\mu_j\right) +\mu_j^2-\mu_i^2\right) f(x)\ \mathrm{d}x \\
              &= \begin{multlined}[t][10cm] \logof{\frac{\sigma_j}{\sigma_i}}- \frac{1}{2} + \frac{1}{2\sigma_j^2}\left[\int_{-\infty}^{\infty}\left(x-\mu_i\right)^2 f(x)\ \mathrm{d}x+2\left(\mu_i-\mu_j\right)\int_{-\infty}^{\infty}x f(x)\ \mathrm{d}x \right. \\ \left. +\left(\mu_j^2-\mu_i^2\right)\int_{-\infty}^{\infty} f(x)\ \mathrm{d}x\right] \end{multlined} \\
              &= \logof{\frac{\sigma_j}{\sigma_i}}- \frac{1}{2} + \frac{1}{2\sigma_j^2}\left[\Var{X}+2\left(\mu_i-\mu_j\right)\E{X} +\mu_j^2-\mu_i^2\right] \\
              &= \logof{\frac{\sigma_j}{\sigma_i}}- \frac{1}{2} + \frac{\sigma_i^2+2\left(\mu_i-\mu_j\right)\mu_i +\mu_j^2-\mu_i^2}{2\sigma_j^2} \\
              &= \logof{\frac{\sigma_j}{\sigma_i}} + \frac{\sigma_i^2+\left(\mu_i-\mu_j\right)^2}{2\sigma_j^2} - \frac{1}{2}
\end{align*}

\paragraph*{S5 Appendix.}
\label{S5_Appendix}
{\bf Derivation of the H-distance of two uniform distributions.} Let $x_1$ and $x_2$ be the centers of two uniform distributions $f(x)$ and $g(x)$ both with width $2\epsilon$ and, without loss of generality, let $x_1 > x_2$. The Hellinger distance is 
\begin{align*}
    H^2(f,g) = 1 - \int_{-\infty}^\infty \sqrt{f(x) g(x)}\ \mathrm{d}x.
\end{align*}
There is two distinct cases for the Hellinger affinity. First is when there is no overlap, i.e.\ $x_1 - x_2 > 2\epsilon$, between $f$ and $g$ which means that the Hellinger affinity is zero and hence
\begin{align*}
    \int_{-\infty}^\infty \sqrt{f(x) g(x)}\ \mathrm{d}x = 0,
\end{align*}
therefore
\begin{align*}
    H^2(f,g) = 1.
\end{align*}
Second is when there is overlap, i.e.\ $x_1 - x_2 \leq 2\epsilon$, and the Hellinger affinity is non-zero. Specifically the Hellinger affinity will be the area of the overlap which is 
\begin{align*}
    \int_{-\infty}^\infty \sqrt{f(x) g(x)}\ \mathrm{d}x = \frac{2\epsilon + x_2 - x_1}{2\epsilon},
\end{align*}
therefore,
\begin{align*}
    H^2(f,g) = \frac{x_1-x_2}{2\epsilon}.
\end{align*}
We can conclude that
$$
H^2(f,g)=\begin{cases}
          \frac{x_1-x_2}{2\epsilon} \quad &\text{if} \ x_1-x_2 \leq 2\epsilon\\
          1 \quad &\text{if} \ x_1-x_2 > 2\epsilon \\
     \end{cases}.
$$

\paragraph*{S6 Appendix.}
\label{S6_Appendix}
{\bf Estimating cluster count from exponential, mean $\DKL$ growth.} Consider a Martins simulation that has reached steady-state, let $\Omega$ be the set of all agents in the simulation, and the simulation has divided into $\psi$ separate opinion clusters such that
$$\Omega = \bigcup_{k=1}^\psi A_k ,$$
where $A_k$ is a set of agents in the $k$th opinion cluster such that
$$\begin{aligned}
&A_k\cap A_l = \emptyset,\ \forall k,l \\
&\left|A_k\right| = \left|A_l\right|,\ \forall k,l
\end{aligned}$$
In steady-state, all agents in a Martins simulations have $\sigma \rightarrow 0$ . Then, according to Eq \ref{eqn:Martins}, agents will only achieve a $p^*=1$ when $x_i = x_j$ i.e. when two agent are in the same cluster. If $x_i \neq x_j$ i.e. when two agent are in the different clusters, then, with $\sigma \rightarrow 0$, $p^*=0$. Also, from Eq \ref{eqn:Martins}, when $p*=1$, agents will halve their $\sigma^2$, which means that Eq \ref{eqn:DKLgrowthterm} will double for a select proportion of agent pairs in the simulation, but not for all agent pairs. Meaning that $\overline{\DKL}$ will proportionally grow by a fixed amount $a$ after a single $p^*=1$ interaction
\begin{align}
    \frac{\overline{\DKL}(t+s)}{\overline{\DKL}(t)} = a, \label{eqn:meanDKLPartRec}
\end{align}
where $t$ is an arbitrary number of interactions after the simulation has reached steady-state and $s$ is the number of interactions until a $p^*=1$ interaction occurs. 

In a single $p^*=1$ interaction, two agents will halve their `variance' (uncertainty squared), doubling the $\DKL$ between those agents and every other agent in the simulation. All other pairings will maintain the same $\DKL$. The proportional growth of $\overline{\DKL}$, $a$, is dependent only on the pairwise agent's $\DKL$ where $\DKL\neq 0$. 
We refer to agent pairings that have a $\DKL = 0$ as non-contributing and those with $\DKL > 0$ as contributing. Therefore,
$$a = (1-q)+2q = 1+q$$
where $q$ is the proportion of contributing pairings that double their $\DKL$. Let $n$ be the number of agents in the simulation and consider an agent inside a cluster, only $n(1-1/\psi)$ agents would generate contributing $\DKL$s since agents inside the hypothetical agent's cluster would generate a $\DKL = 0$. Since two agents will be interacting, we can double this agent count to get the total number $\DKL$s that double from two opinions updating, resulting in $2n(1-1/\psi)$. The total number of contributing pairings will be the total number of possible pairings, $n^2$, minus the non-contributing pairings, i.e.\ pairings which pair agents from the same clusters, $\psi (n/\psi)^2$. Therefore the total number of contributing pairings is $n^2(1-1/\psi)$. It follow then that $q = 2/n$. Thus
$$a = \frac{n+2}{n},$$
and Eq \ref{eqn:meanDKLPartRec} becomes
\begin{align}
    \frac{\overline{\DKL}(t+s)}{\overline{\DKL}(t)} = \frac{n+2}{n}. \label{eqn:meanDKLRec}
\end{align}

Now consider a simulation which enters steady state after $t_0$ interactions, without loss of generality let $t_0 = 0$. Let $s_1$ be the number of interactions until the first $p^*=1$ interaction occurs, from Eq \ref{eqn:meanDKLRec} we know that 
\begin{align*}
    \frac{\overline{\DKL}(t_0+s_1)}{\overline{\DKL}(t_0)} &= \frac{n+2}{n}, \\
    \overline{\DKL}(s_1) &= \frac{n+2}{n} K_0,
\end{align*}
where $K_0 = \overline{\DKL}(t_0)$. For the second $p^*=1$ interaction $s_2$, using Eq \ref{eqn:meanDKLRec} again, we can show 
\begin{align*}
    \frac{\overline{\DKL}(s_1+s_2)}{\overline{\DKL}(s_1)} &= \frac{n+2}{n}, \\
    \overline{\DKL}(s_1+s_2) &= \left(\frac{n+2}{n}\right)^{2} K_0.
\end{align*}
So for the general $r$th $p^*=1$ interaction $s_r$ we have
\begin{align}
    \overline{\DKL}\left(\sum_{i=1}^r s_i\right) = \left(\frac{n+2}{n}\right)^{r} K_0. \label{eqn:meanDKLPart1}
\end{align}
Let $t$ be the total number of interaction after $t_0$, then
$$
r = \frac{t}{\overline{s}},
$$
where 
$$
\overline{s} = \frac{1}{r}\sum_{i=1}^r s_i.
$$
Therefore Eq \ref{eqn:meanDKLPart1} becomes 
\begin{align}
    \overline{\DKL}\left(t\right) = \left(\frac{n+2}{n}\right)^{t/\overline{s}} K_0, \label{eqn:meanDKLPart2}
\end{align}
and because 
$$ 
\frac{n+2}{n} > 1,
$$
$\overline{\DKL}$ will grow exponentially in $t$.

The simulation chooses agents at random to interact, so $s_1,\ldots,s_r$ will be valued from the same random variable $S$. which will follow a geometric distribution with the probability of success equal to the probability of interaction being a $p^*=1$ interaction, or the probability that for any $k\in 1,\ldots,\psi$ the simulation chooses two agents from a cluster $A_k$. Therefore the probability of success in $S$ is $1/\psi$ which implies $\E{S} = \psi$ and by extension $\E{\overline{s}} = \psi$, hence $\bar{s}$ is an unbiased estimator or $\psi$. Substituting $\E{\bar{s}}=\psi$ Eq \ref{eqn:meanDKLPart2} becomes
\begin{align}
    \overline{\DKL}\left(t\right) = \left(\frac{n+2}{n}\right)^{t/\psi} K_0. \label{eqn:meanDKL}
\end{align}
Taking the $\log$ on both sides of Eq \ref{eqn:meanDKL} gives us
$$
\logof{\overline{\DKL}\left(t\right)} = \logof{K_0} +\frac{\logof{n+2}-\logof{n}}{\psi} t,
$$
we can estimate the effective cluster count $\widehat{\psi}$ by fitting a linear model with slope $m$ to simulation data $\logof{\overline{\DKL}\left(t\right)}$ using a least squares estimation. The general expression for the effectively estimated cluster count is
\begin{align*}
    \widehat{\psi} = \frac{\logof{n+2}-\logof{n}}{m}.
\end{align*}

%\paragraph*{S1 File.}
%\label{S1_File}
%{\bf Simulation data for the Martins model.}  The column of data cell represents the parameter values for initial $\sigma$ in this order: 0.5, 0.2, 0.14, 0.1, 0.05. The rows represent individual simulations.

%\paragraph*{S2 File.}
%\label{S2_File}
%{\bf Simulation data for the H-K bounded Confidence model.} The column of data cell represents the parameter values for $\epsilon$ in this order: 0.05, 0.1, 0.15, 0.2, 0.25, 0.3. The rows represent individual simulations.

%\paragraph*{S1 Video.}
%\label{S1_Video}
%{\bf Lorem ipsum.}  Maecenas convallis mauris sit amet sem ultrices gravida. Etiam eget sapien nibh. Sed ac ipsum eget enim egestas ullamcorper nec euismod ligula. Curabitur fringilla pulvinar lectus consectetur pellentesque.

%\paragraph*{S1 Appendix.}
%\label{S1_Appendix}
%{\bf Lorem ipsum.} Maecenas convallis mauris sit amet sem ultrices gravida. Etiam eget sapien nibh. Sed ac ipsum eget enim egestas ullamcorper nec euismod ligula. Curabitur fringilla pulvinar lectus consectetur pellentesque.

%\paragraph*{S1 Table.}
%\label{S1_Table}
%{\bf Lorem ipsum.} Maecenas convallis mauris sit amet sem ultrices gravida. Etiam eget sapien nibh. Sed ac ipsum eget enim egestas ullamcorper nec euismod ligula. Curabitur fringilla pulvinar lectus consectetur pellentesque.

%\section*{Acknowledgments}
%Robyn P. Araujo is the recipient of an Australian Research Council (ARC) Future Fellowship (project number FT190100645) funded by the Australian Government.

%\nolinenumbers

\bibliography{Bibliography}

% Either type in your references using
% \begin{thebibliography}{}
% \bibitem{}
% Text
% \end{thebibliography}
%
% or
%
% Compile your BiBTeX database using our plos2015.bst
% style file and paste the contents of your .bbl file
% here. See http://journals.plos.org/plosone/s/latex for 
% step-by-step instructions.
% 

\end{document}